\newtheorem{theorem}{Theorem}[section]
\newtheorem{lemma}[theorem]{Lemma}
\newtheorem{proposition}[theorem]{Proposition}
\newtheorem{definition}[theorem]{Definition}
\title{A class of regression models for parallel and series systems with a random number of components}
\author{Alice L. Morais and Silvia L. P. Ferrari\\
\footnotesize{Departamento de Estatística, Universidade de São Paulo}}
\date{}
\begin{document}
\maketitle

\begin{abstract}

In this paper we extend the Weibull power series (WPS) class of distributions and named this new class as extended Weibull power series (EWPS) class of distributions. The EWPS distributions are related to series and parallel systems with a random number of components, whereas the WPS distributions (Morais and Barreto-Souza, 2011) are related to series systems only. Unlike the WPS distributions, for which the Weibull is a limiting special case, the Weibull law is a particular case of the EWPS distributions. We prove that the distributions in this class are identifiable under a simple assumption. We also prove stochastic and hazard rate order results and highlight that the shapes of the EWPS distributions are markedly more flexible than the shapes of the WPS distributions. We define a regression model for the EWPS response random variable to model a scale parameter and its quantiles. We present the maximum likelihood estimator and prove its consistency and normal asymptotic distribution. Although the 
construction of this class was motivated by series and parallel systems, the EWPS distributions are suitable for modeling a wide range of positive data sets. To illustrate potential uses of this model, we apply it to a real data set on the tensile strength of coconut fibers and present a simple device for diagnostic purposes.\\

\noindent {\it Keywords: Weibull distribution, quantile inference, regression model, systems with random number of components.}

\end{abstract}

\section{Introduction}

Reliability studies generally focus on the study of the failure of certain experimental units. It can often be assumed that there is a mechanism that leads to the failure of these units, e.g., a series or a parallel system. For instance, consider coconut fibers as the experimental units, and their rupture as their failure. The tension when the coconut fiber breaks can be interpreted as a failure of a parallel system because the microscopic architecture of this material is a collection  of smaller fibers. The rupture of this material happens after the rupture of all of the smaller fibers, which characterizes a parallel system. In this case, the number of components in the system is unknown. Some models have been proposed in the literature for modeling the time to failure of series and parallel systems, and many studies consider a fixed number of system components. Because real systems may be complex, it would often be more appropriate to consider an unknown amount of components.

Nakagawa and Zhao (2012) presented a model for the time to failure of parallel systems assuming a zero-truncated Poisson number of components. Marshall and Olkin (1997) defined a class of distributions from the minimum and the maximum of a geometric number of independent and identically distributed (iid) random variables. When these random variables are positive, the resulting distribution is related to series and parallel systems. Kus (2007) constructed a distribution connected to series systems based on the minimum of a Poisson number of iid exponential random variables. Crescenzo and Pellerey (2011) provided stochastic results for the time to failure of series and parallel systems with non-identically distributed components.
The limiting distribution of the maximum of a random number of independent random variables was discussed by Barndorff-Nielsen (1964). Barakat and El-Shandidy (2004) found the asymptotic behavior of general order statistics from a random-sized sample. 

Models based on series and parallel systems with a random number of components have been used not only in material strength studies, but also, in medical research and other fields. In associated medical research studies, the series and parallel systems are called the first and the last latent activation schemes, respectively. Yakovlev et al. (1993) proposed a framework for the first activation scheme. Cooner et al. (2007) illustrated some uses of both the first and the last latent activation schemes assuming the possibility of a cure rate.

Morais and Barreto-Souza (2011) introduced the Weibull power series (WPS) class of distributions, which are related to the time to failure of a series system with a random number of components. These researchers assumed that the failure times of the system components are independent and follow a Weibull distribution and that the unknown number of components follows a discrete power series distribution. Here we propose an extension of this class to include parallel systems to yield a more flexible class of distributions, which were denoted by the extended Weibull power series (EWPS) class of distributions. The WPS distributions have some restrictions that are relaxed after the proposed extension.

In this paper we first introduce the EWPS class of distributions and derive some of its properties. We then propose a EWPS regression model. Our approach is focused on modeling the scale parameter when the response is assumed to be an EWPS random variable. The scale parameter is of particular practical interest because it is directly proportional to the quantiles of the response variable.

The paper is organized as follows. In Section 2, we provide a brief review of WPS distributions. In Section 3, we define the EWPS class of distributions and derive some of its properties. In Section 4, we define a regression model with EWPS distributed response, discuss its estimation based on the maximum likelihood, and present asymptotic properties of the estimators. In Section 5, we present a strategy to infer the quantiles from a simple 
transformation of the scale parameter. In Section 6, we present a real data application of the EWPS regression model to illustrate potential 
uses of the new model and present a simple device for diagnostic purposes. 
In Section 7, we discuss larger classes of models that include EWPS distributions. In Section 8, we present some concluding remarks.

\section{Brief review of WPS distributions}

The WPS distributions are constructed from a composition between the discrete power series class of distributions and the Weibull law as follows. Let $a_n\geq 0$ for $n\in \mathbb{N}$ such that $a_1>0$ and 
\begin{eqnarray} \label{fpseries}
C(\theta)=\sum_{n=1}^\infty a_n \theta^n, \quad \forall \theta\in (-s,s),
\end{eqnarray}
where $s>0$ is the radius of convergence. Consider the function $p:\mathbb{N}\rightarrow \mathbb{R}$ given by
\begin{equation}\label{pseries}
p(n;\theta)=\frac{a_n\theta^n}{C(\theta)},\,\,\,\,\theta\in (-s,s),\quad n=1,2,\ldots.
\end{equation}
If $\theta>0$, $p(n;\theta)$ in (\ref{pseries}) is the probability function (pf) of a power series  distribution truncated at zero (Noack, 1950). We use the notation $N\sim\mbox{PS}(\theta;C)$ for the random variable $N$ with a pf $p(n;\theta)$ in (\ref{pseries}) with $\theta>0$. The Poisson, logarithmic, geometric, 
and binomial (where $m$ is the known number of replicates) distributions truncated at zero are special cases of the truncated power series distributions.

Let $Z_1, Z_2, \ldots$ be iid random variables with $Z_1\sim \mbox{Weibull}(\lambda,\alpha)$, i.e., $Z_1$ has a Weibull distribution with scale 
parameter $\lambda>0$, shape parameter $\alpha>0$, and probability density function (pdf)  
\begin{equation}\label{weibull}
g(z;\lambda,\alpha)=\alpha\lambda^{-\alpha} z^{\alpha-1}e^{-(z/\lambda)^\alpha}, \,\,\,\, z>0, \,\lambda>0, \,\alpha>0.
\end{equation}
Let $N\sim\mbox{PS}(\theta;C)$. Note that it is assumed that $\theta$ is positive in this case. The WPS class of distributions is defined by the marginal distribution of $Z_{(1)}=\min\{Z_1,\ldots,Z_N\}$ with the corresponding cumulative distribution function (cdf)
\begin{equation}\label{cdf}
F(y;\lambda,\alpha,\theta)=1-\frac{C\left(\theta S(y;\lambda,\alpha)\right)}{C(\theta)},\quad y>0,
\end{equation}
\noindent where $S(y;\lambda,\alpha)=\exp\{-(y/\lambda)^\alpha\}$ for $y>0$ is the survival function of the $\mbox{Weibull}(\lambda,\alpha)$ distribution. The WPS pdf is given by
\begin{equation}\label{dens_WPS}
f(y;\lambda,\alpha,\theta)=\frac{\theta g(y;\lambda,\alpha)}{C(\theta)} 
C'\left(\theta S(y;\lambda,\alpha)\right),\quad y>0,
\end{equation}
for $\lambda, \alpha>0$ and $0<\theta<s$. 

The WPS distributions arise, for example, in reliability studies. Assume that a
machine has $N$ unknown initial defects, and let $Z_i$ be the time to failure of the machine due to the $i$th defect, where $i=1,\ldots,N$.
If the $Z_i$'s are assumed to be iid variables with $Z_1\sim\mbox{Weibull}(\lambda,\alpha)$ and $N\sim \mbox{PS}(\theta;C)$, the time $Y$ to the first failure has a WPS pdf, as given in (\ref{dens_WPS}). In other words, $Y$ is the time to failure of a series system with a random number $N$ of components, where $N\sim\mbox{PS}(\theta;C)$.

In the next section we extend the parameter space for $\theta$ to include negative values and name the resulting class of distributions as extended Weibull power series (EWPS) distributions. After this extension is made, the original characterization of the WPS distributions, which is based on series systems, does not hold for $\theta<0$. We prove that there is a parallel system characterization for some EWPS distributions when $\theta<0$. This extension allows more flexibility in the shapes of the density and hazard functions. We also provide some results on the hazard rate order and the stochastic order to highlight the relevance of the proposed extension. Although the construction of this class is motivated by series and parallel systems, this model is suitable to data with positive support. 

The Weibull distribution is a limiting case of WPS distributions. In our proposed extension of this class, we define the Weibull law as the special case when $\theta=0$ and show that this definition is appropriate. For EWPS distributions, the Weibull law represents a system with a single component.

\section{The EWPS class of distributions}

Morais and Barreto-Souza (2011) introduced the WPS survival class of distributions, which are related to series systems with a random number of components. The WPS distributions have flexible density and hazard shapes but exhibit some restrictions that will be relaxed by the extension introduced in this section. For fixed scale and shape parameters ($\lambda$ and $\alpha$, respectively), we prove that the hazard function of any WPS distribution is always uniformly above the hazard function of the Weibull law. The extension of the WPS distribution proposed in this section includes distributions for which the opposite occurs.

As observed in Section 2, the WPS distributions are indexed by three parameters, namely $\lambda>0$, $\alpha>0$, and $\theta\in(0,s)$. The idea is to extend the WPS class of distributions to allow $\theta$ to assume negative values. To formalize the proposed extension, we provide the following proposition.

\begin{proposition}\label{prop1}
For each power series function $C(\cdot)$ in (\ref{fpseries}), let $S^*=\{\theta\in (-s,0): C'(\theta)=0\}$.
Then, for all $y>0$, $\lambda>0$, and $\alpha>0$, $f(y;\lambda,\alpha,\theta)$ in (\ref{dens_WPS}) is non-negative for all $\theta \in (s^*,0)$, where
\begin{equation}\label{def_s}
s^*=\left\{
\begin{array}[c]{cc}
	\max S^*,& {\rm if}\,\,\, S^*\neq\emptyset\\
	-s,& {\rm otherwise.}\\
\end{array}\right.
\end{equation}
\end{proposition}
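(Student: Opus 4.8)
The plan is to determine the sign of $f(y;\lambda,\alpha,\theta)$ directly from its expression in (\ref{dens_WPS}) by tracking the signs of the three factors $\theta$, $C(\theta)$, and $C'(\theta S(y;\lambda,\alpha))$, using that $g(y;\lambda,\alpha)>0$ for every $y>0$. Writing $u=S(y;\lambda,\alpha)=\exp\{-(y/\lambda)^\alpha\}$, note that as $y$ ranges over $(0,\infty)$ the quantity $u$ ranges over the open interval $(0,1)$; hence it suffices to show that $\theta\,C'(\theta u)/C(\theta)\geq 0$ for every $\theta\in(s^*,0)$ and every $u\in(0,1)$.

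First I would establish that $C'$ is strictly positive on $(s^*,0)$. Since $C$ is a power series with radius of convergence $s$, its derivative $C'(\theta)=\sum_{n=1}^\infty n a_n\theta^{n-1}$ is continuous on $(-s,s)$, and $C'(0)=a_1>0$. By the very definition of $s^*$ in (\ref{def_s}), the function $C'$ has no zero in the open interval $(s^*,0)$: if $S^*=\emptyset$ this is immediate, while if $S^*\neq\emptyset$ it follows because $s^*=\max S^*$ is the largest zero of $C'$ in $(-s,0)$. A continuous function with no zero on an interval keeps a constant sign there; letting $\theta\to0^-$ gives $C'(\theta)\to a_1>0$, so that sign is positive and $C'(\theta)>0$ throughout $(s^*,0)$.

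Next I would deduce the signs of the remaining factors. For fixed $\theta\in(s^*,0)$ and $u\in(0,1)$ we have $\theta u\in(\theta,0)\subset(s^*,0)$, so the preceding step yields $C'(\theta u)>0$. For $C(\theta)$ itself, I would use $C(0)=0$ together with the fundamental theorem of calculus to write $C(\theta)=-\int_\theta^0 C'(t)\,dt$; since $C'>0$ on $(s^*,0)$, this integral is strictly positive and hence $C(\theta)<0$ for every $\theta\in(s^*,0)$. Combining the three observations, $\theta<0$ and $C(\theta)<0$ give $\theta/C(\theta)>0$, and multiplying by the positive quantities $g(y;\lambda,\alpha)$ and $C'(\theta u)$ shows $f(y;\lambda,\alpha,\theta)>0$, which is the desired non-negativity.

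The main obstacle is the second paragraph: the whole argument hinges on converting the abstract definition of $s^*$ as the maximal zero of $C'$ into the concrete sign statement $C'>0$ on $(s^*,0)$, which requires invoking the continuity of the power series derivative and a connectedness (intermediate-value) argument rather than any explicit computation. Everything else is a routine propagation of signs.
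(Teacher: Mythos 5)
Your proof is correct and takes essentially the same route as the paper: both reduce the claim to the sign of $\theta\,C'(\theta u)/C(\theta)$ for $\theta\in(s^*,0)$ and $u\in(0,1)$, and both hinge on $C'$ having no zeros on $(s^*,0)$ together with the anchor $C'(\theta)\to a_1>0$ as $\theta\to 0$. The organization differs slightly, in your favor: the paper first argues that $C$ (not $C'$) has constant sign on $(s^*,0)$ (implicitly via $C(0)=0$ and Rolle's theorem, since a zero of $C$ in the interval would force a zero of $C'$ above $s^*$) and then runs a two-case analysis on $C<0$ versus $C>0$, deducing the sign of $C'$ from monotonicity in each case; you instead pin down $C'>0$ on all of $(s^*,0)$ directly by the continuity/constant-sign argument and then get $C(\theta)=-\int_\theta^0 C'(t)\,dt<0$ from the fundamental theorem of calculus, which eliminates the case split (and in fact shows the paper's second case $C>0$ never occurs). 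One step the paper includes that you gloss over: it first proves that $s^*$ is well defined, i.e., that $\max S^*$ is attained when $S^*\neq\emptyset$. Your second paragraph invokes ``$s^*=\max S^*$ is the largest zero of $C'$,'' which presupposes exactly this; the fix is one line with tools you already have (continuity of $C'$ makes $S^*$ closed in $(-s,0)$, and $C'(\theta)\to a_1>0$ yields an $\varepsilon>0$ with $S^*\subset(-s,-\varepsilon]$, so the supremum lies in $S^*$), and you should add it, but its absence is an omission of a routine verification rather than a flaw in the argument.
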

\begin{proof} 
We first prove the existence of $s^*$. We have that $C'(\theta)=\sum_{i=1}^\infty na_n\theta^{n-1}\rightarrow a_1$ as $\theta\rightarrow 0$ and $a_1>0$. Hence, $\exists \varepsilon>0$ such that $C'(\theta)>0$ for $\theta \in(-\varepsilon,0)$. Therefore, because $C(\cdot)$ is a differentiable function, if $S^*\neq \emptyset$, the maximum of $S^*$ exists. This proves that $s^*$ is well defined. 

To complete the proof, it is sufficient to prove that $\theta C'(\theta b)/C(\theta)>0$, $\forall \theta\in(s^*,0)$, and $\forall b\in (0,1)$. Since $C(0)=0$ and from the construction of $s^*$, $C(\theta)$ is strictly negative or strictly positive for $\theta \in(s^*,0)$. If $C(\theta)<0$ for $\theta\in(s^*,0)$, $C(\theta)$ is strictly increasing in $\theta \in(s^*,0)$, which shows that $C'(\theta)>0$ for any $\theta \in(s^*,0)$. Hence, $\theta C'(\theta b)/C(\theta)>0$. If $C(\theta)>0$ for $\theta \in(s^*,0)$, the result follows analogously.
\end{proof}
Proposition \ref{prop1} states that it is possible to find an open interval $(s^*,0)$ such that  $f(y;\lambda,\alpha,\theta)$ in (\ref{dens_WPS}) is non-negative for all $\theta\in(s^*,0)$. Note that $\int_0^\infty f(y;\lambda,\alpha,\theta) dy=1 $ for any $\theta\in(s^*,0)$; hence,  $f(\cdot;\lambda,\alpha,\theta)$ is a density function. Therefore, it is possible to define an extension of the WPS distribution as follows.

\begin{definition}\label{def1}
For a given function $C(\cdot)$ in (\ref{fpseries}), the EWPS distribution with parameters $\lambda,\alpha>0$, $\theta \in (s^*,s)$, and $s^*$ as given in (\ref{def_s}) is defined by the pdf in (\ref{dens_WPS}) when $\theta \neq 0$, and by the pdf of the Weibull distribution given in (\ref{weibull}) when $\theta=0$.
\end{definition}

 We use the notation $Y\sim\mbox{EWPS}(\lambda,\alpha,\theta;C)$ when $Y$ is a random variable with the distribution given in Definition \ref{def1}. In Table \ref{tab1}, we provide useful quantities for the construction of some EWPS distributions. The last column is the name of the respective power series distribution with the pdf shown in (\ref{pseries}) when $\theta>0$. In cases 4 and 5, $m > 1$ is a known integer value, and for $\theta>0$, $m$ is the number of replicates of the binomial distribution or the fixed number of failures before $n$ successes of the negative binomial distribution. In case 6, $\mathcal{I}_{\{\mbox{odd}\}}(n)$ is the indicator function for odd $n$. If $C(\cdot)$ is chosen as in cases 1, 2, 3, 4, 5, and 6, the corresponding EWPS laws are called extended Weibull Poisson (EWP), extended Weibull geometric (EWG), extended Weibull logarithmic (EWL), extended Weibull binomial (EWB), extended Weibull negative binomial (EWNB), and extended Weibull logarithmic II (EWLII) distributions, 
respectively. The names of these special cases refer to the power series distribution used in the construction of the corresponding EWPS distributions for $\theta>0$. The extension of the parameter space for $\theta$ proposed here adds more flexibility to the density shapes, as observed in Figure 1.

\begin{table}[h!]
\begin{center} \small\caption{Useful quantities of some power series distributions.}\label{tab1}
\footnotesize
\begin{tabular}{ccccccc}
\hline\hline
&$a_n$&$C(\theta)$&$C'(\theta)$&$s^*$&$s$&Distribution (for $\theta>0$)\\
\hline
Case 1&$n!^{-1}$&$e^\theta-1$&$e^\theta$&$-\infty$&$\infty$&Poisson\\
Case 2&$n^{-1}$&$-\log(1-\theta)$&$(1-\theta)^{-1}$&$-1$&$1$&logarithmic\\
Case 3&$1$&$\theta(1-\theta)^{-1}$&$(1-\theta)^{-2}$&$-1$&$1$&geometric\\
Case 4&$\dbinom{m}{n}$&$(\theta+1)^m-1$&$m(\theta+1)^{m-1}$&$-1$&$\infty$&binomial\\
Case 5&$\dfrac{\Gamma(m+n-1)}{(n-1)!\Gamma(m)}$&$\theta(1-\theta)^{-m}$&$\dfrac{\{\theta(m-1)+1\}}{(1-\theta)^{m+1}}$&$\dfrac{1}{1-m}$&$1$&negative binomial\\
Case 6&${2}n^{-1}\mathcal{I}_{\{\mbox{odd}\}}(n)$&$\log\left(\dfrac{1+\theta}{1-\theta}\right)$&$\dfrac{2}{1-\theta^2}$&$-1$&$1$&logarithmic II\\
\hline\hline
\end{tabular}
\end{center}
\end{table}

\begin{figure}
	\centering
		\includegraphics[width=.33\textwidth]{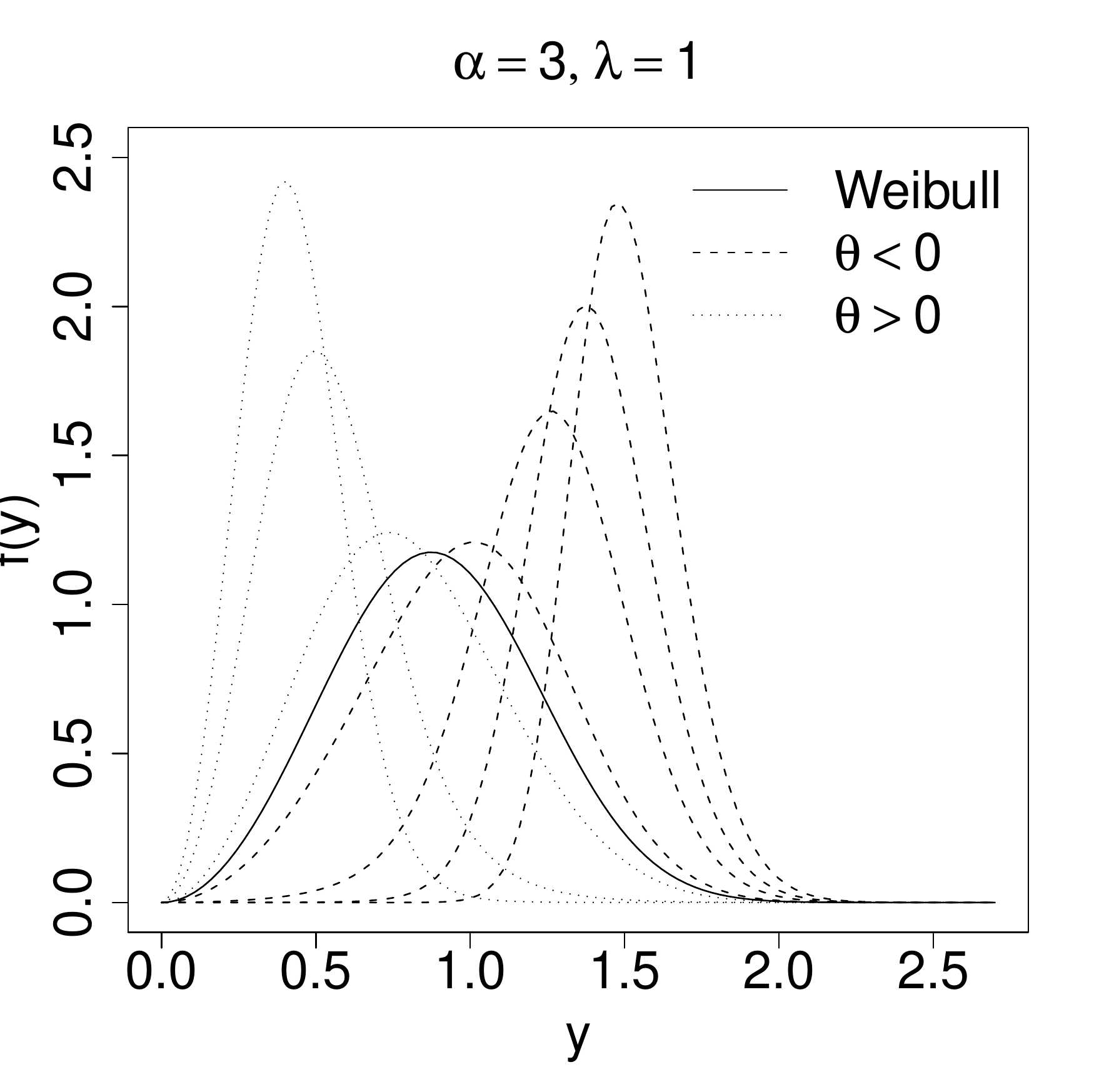}\includegraphics[width=.33\textwidth]{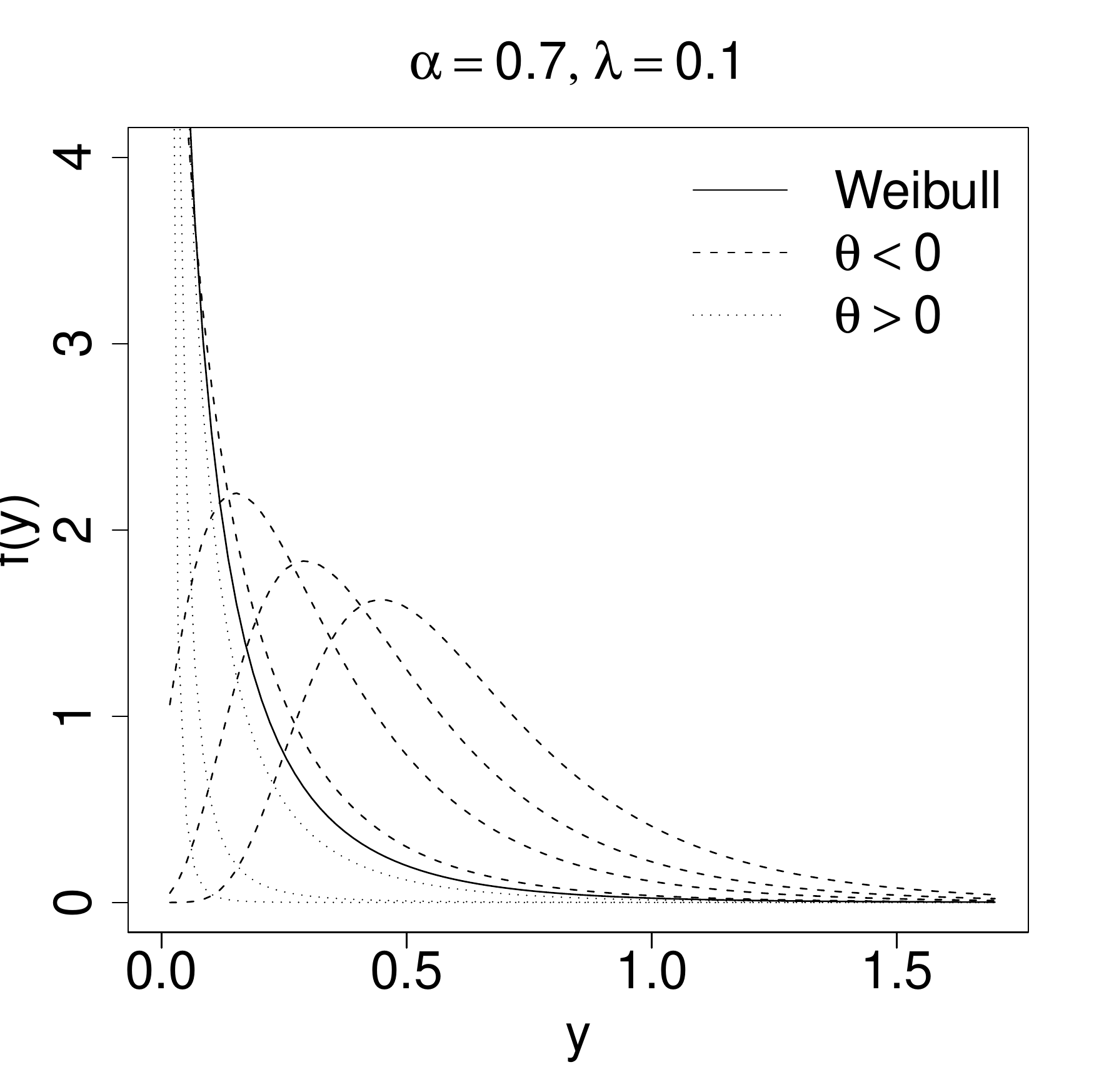}\includegraphics[width=.33\textwidth]{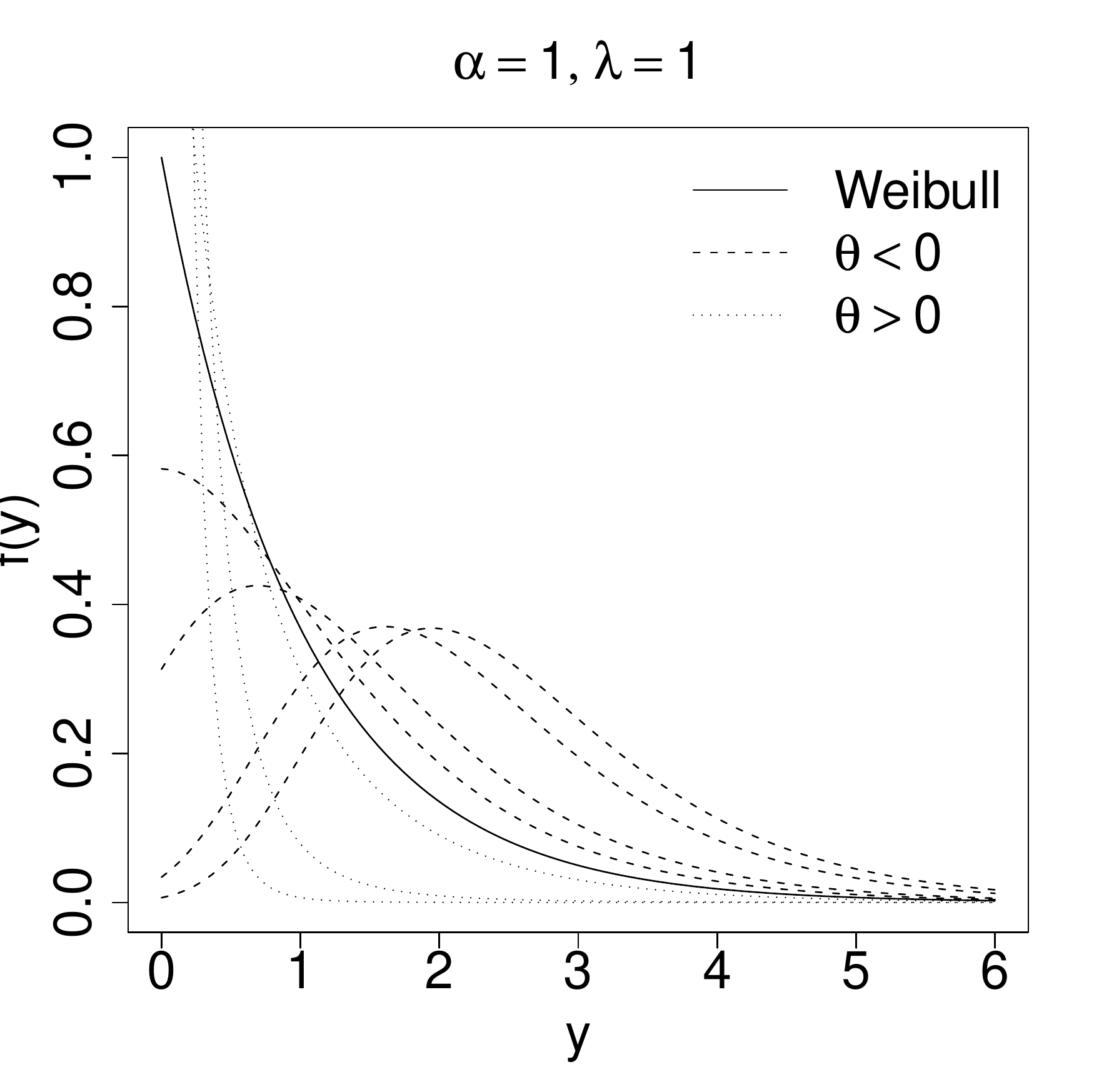}
	\label{fig:densWPcaso1}\caption{Density functions of the EWP distribution for some values of $\theta$, $\lambda$, and $\alpha$.}
\end{figure}

From  (\ref{dens_WPS}), we obtain that
\begin{equation}\label{expansao_dens}
f(y;\lambda,\alpha,\theta)=\sum_{n=1}^\infty \frac{a_n\theta^n}{C(\theta)}g\left(y;\lambda n^{-1/\alpha},\alpha\right),
\end{equation}
\noindent for $y>0$, $\lambda, \alpha >0$ and $\theta \in (s^*,0)\cup (0,s)$, where $g(y;\lambda,\alpha)$ is the pdf of  the Weibull distribution with parameters $\lambda$ and $\alpha$ given in (\ref{weibull}). Therefore, the EWPS densities are an infinite linear combination of Weibull densities. In particular, for $\theta>0$, the EWPS 
densities are infinite mixtures of Weibull densities with weights that are determined by power series laws. %Obviously when $a_n\neq 0$ only for finitely values of $n$  density (\ref{expansao_dens}) will be a finite linear combination. 
This property is helpful for obtaining 
the moments of the EWPS distributions and for proving identifiability. 
It follows from (\ref{expansao_dens}) and the Dominated Convergence Theorem that the $r$th moment 
of $Y\sim \mbox{EWPS}(\lambda,\alpha,\theta;C)$ for $\theta \neq 0$ is given by
\begin{equation*}
E(Y^r)=\frac{\Gamma(r/\alpha+1)\lambda^r}{C(\theta)}\sum_{n=1}^\infty \frac{a_n \theta^n}{n^{r/\alpha}}, \qquad r>0.
\end{equation*}
Note that all the moments are finite because $\sum_{n=1}^\infty \left |a_n \theta^n n^{-r/\alpha}\right|\leq C(\left|\theta\right|)<\infty$, for $\theta \in (s^*,s)$.

The domain of the density function in (\ref{dens_WPS}) regarded as a function of $\theta$ does not contain the point $\theta=0$. Recall that the EWPS distribution is  defined as the Weibull distribution when $\theta=0$. Since the Weibull law is a limiting special case of all EWPS distributions when $\theta\rightarrow 0$, as stated in the next proposition, the definition of the EWPS distribution as the Weibull distribution when $\theta=0$ is justified.

\begin{proposition}\label{prop3}
As $\theta\rightarrow 0$, the cdf given in (\ref{cdf}) converges to the cdf of the Weibull distribution with scale parameter $\lambda$ and shape parameter $\alpha$.
\end{proposition}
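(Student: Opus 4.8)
The plan is to show directly that the only non-trivial factor in (\ref{cdf}), namely $C(\theta S(y;\lambda,\alpha))/C(\theta)$, converges to $S(y;\lambda,\alpha)$ as $\theta\to 0$, because the Weibull cdf with scale $\lambda$ and shape $\alpha$ is exactly $1-S(y;\lambda,\alpha)$. Fix $y>0$, $\lambda>0$, and $\alpha>0$, and write $b=S(y;\lambda,\alpha)=\exp\{-(y/\lambda)^\alpha\}$; note that $b\in(0,1)$ because $y>0$. It then suffices to establish that $\lim_{\theta\to 0} C(\theta b)/C(\theta)=b$.

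First I would record the elementary analytic facts about $C$ that turn this limit into an indeterminate form. Because $C(\theta)=\sum_{n=1}^\infty a_n\theta^n$ starts at $n=1$, we have $C(0)=0$, and since the series has radius of convergence $s>0$, the function $C$ is differentiable on $(-s,s)$ with $C'(0)=a_1>0$; this is precisely the computation $C'(\theta)\to a_1$ already used in the proof of Proposition \ref{prop1}. Consequently, as $\theta\to 0$ (from either side, so that both $\theta$ and $\theta b$ stay in $(-s,s)$), the numerator $C(\theta b)$ and the denominator $C(\theta)$ both tend to $C(0)=0$, so $C(\theta b)/C(\theta)$ is a $0/0$ indeterminate form.

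The main step is then a single application of L'H\^opital's rule in the variable $\theta$. Differentiating, the ratio $C(\theta b)/C(\theta)$ has the same limit as $b\,C'(\theta b)/C'(\theta)$, provided the latter limit exists. Letting $\theta\to 0$ and using the continuity of $C'$ at $0$ together with $C'(0)=a_1\neq 0$, I obtain $b\,C'(0)/C'(0)=b$. Hence $C(\theta b)/C(\theta)\to b$, and substituting back into (\ref{cdf}) gives $F(y;\lambda,\alpha,\theta)\to 1-b=1-S(y;\lambda,\alpha)$, which is the Weibull cdf, as required.

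The argument is essentially routine; the only points requiring care, rather than a genuine obstacle, are (i) confirming that the limit may be taken two-sidedly so that the extension to $\theta<0$ is covered, which holds because $C$ and $C'$ are defined and $C'$ is nonzero on a full neighborhood of $0$ in $(-s,s)$, and (ii) ensuring $b\neq 0$ so that the L'H\^opital step (equivalently, the reduction to quotients of the form $C(t)/t$) is legitimate, which is immediate since $S(y;\lambda,\alpha)>0$ for every $y>0$. An alternative giving the same conclusion is to write $C(\theta b)/C(\theta)=b\,[C(\theta b)/(\theta b)]/[C(\theta)/\theta]$ and use $C(t)/t\to C'(0)=a_1$ as $t\to 0$.
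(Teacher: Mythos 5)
Your proof is correct and takes essentially the same route as the proof the paper defers to (Morais and Barreto-Souza, 2011): everything reduces to showing $C(\theta b)/C(\theta)\to b$ as $\theta\to 0$, resolved through the leading behavior $C(t)=a_1t+o(t)$ with $a_1>0$, whether via L'H\^opital or your alternative factorization $b\,[C(\theta b)/(\theta b)]/[C(\theta)/\theta]$. One merit of your write-up is that you verify the limit two-sidedly, which is actually required here because the EWPS class admits $\theta\in(s^*,0)$, whereas the cited WPS argument only needed $\theta\to 0^{+}$; the paper's one-line ``similar to'' citation glosses over precisely this point.
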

\begin{proof} Similar to the proof of Proposition \ref{prop3} shown by Morais and Barreto-Souza (2011).\end{proof}
Under a simple assumption for $C(\cdot)$, the EWPS distributions are identifiable, as stated in the next proposition.
\begin{proposition}
The EWPS distribution with parameters $\lambda, \alpha>0$, and $\theta \in (s^*,s)$ is identifiable if and only if $C(\cdot)$ is not an odd function.
\end{proposition}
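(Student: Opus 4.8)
The plan is to read identifiability in the usual sense: distinct parameter vectors $(\lambda,\alpha,\theta)\in(0,\infty)^2\times(s^*,s)$ must yield distinct densities $f(\cdot;\lambda,\alpha,\theta)$, and I would prove the two implications separately. For the direction showing that non-oddness is \emph{necessary}, I would argue the contrapositive. If $C(\cdot)$ is odd, then $a_n=0$ for every even $n$, so the mixture representation (\ref{expansao_dens}) only involves odd indices. Replacing $\theta$ by $-\theta$ turns the weight $a_n\theta^n/C(\theta)$ into $a_n(-\theta)^n/C(-\theta)=a_n(-1)^n\theta^n/(-C(\theta))$, which for odd $n$ equals $a_n\theta^n/C(\theta)$; hence $f(\cdot;\lambda,\alpha,\theta)=f(\cdot;\lambda,\alpha,-\theta)$ for every $\theta$. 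Since $s^*<0<s$, the interval $(s^*,s)$ contains a symmetric neighbourhood of $0$, so I can pick $\theta_0\neq0$ with both $\pm\theta_0$ in $(s^*,s)$ (the negative value being admissible by Proposition \ref{prop1}), exhibiting two distinct parameter vectors with the same density and proving non-identifiability.

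For the converse, suppose $f(y;\lambda_1,\alpha_1,\theta_1)=f(y;\lambda_2,\alpha_2,\theta_2)$ for all $y>0$. The first step is to recover the scale and shape from the tail. Because $S(y)\to0$ as $y\to\infty$, we have $C'(\theta S(y))\to C'(0)=a_1>0$, so from (\ref{dens_WPS}) one gets $-\log f(y)=(y/\lambda)^\alpha-(\alpha-1)\log y+c+o(1)$ for a constant $c$. Equating the two expansions gives $(y/\lambda_1)^{\alpha_1}-(y/\lambda_2)^{\alpha_2}=(\alpha_1-\alpha_2)\log y+O(1)$. Comparing growth rates as $y\to\infty$ forces $\alpha_1=\alpha_2=:\alpha$ (otherwise the left side grows polynomially while the right grows only logarithmically), and then $\lambda_1^{-\alpha}=\lambda_2^{-\alpha}$, i.e. $\lambda_1=\lambda_2=:\lambda$. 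The same leading term settles the case $\theta_i=0$, since the Weibull tail coincides.

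With $\lambda$ and $\alpha$ fixed, and assuming first $\theta_1,\theta_2\neq0$, I would cancel the common factor $g(y;\lambda,\alpha)$ in (\ref{dens_WPS}) and substitute $u=S(y)$, which is a bijection onto $(0,1)$, to obtain
\[
\frac{\theta_1}{C(\theta_1)}\,C'(\theta_1 u)=\frac{\theta_2}{C(\theta_2)}\,C'(\theta_2 u),\qquad u\in(0,1).
\]
Each side is real-analytic in $u$ (as a power series in $u$ it has radius of convergence $s/|\theta_i|>1$), so equality on $(0,1)$ forces equality of all Taylor coefficients, giving $\theta_1^{\,n}/C(\theta_1)=\theta_2^{\,n}/C(\theta_2)$ for every $n$ with $a_n\neq0$. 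Taking $n=1$ (recall $a_1>0$) and dividing it into the identity for an \emph{even} index $n_0$ with $a_{n_0}\neq0$ — which exists exactly because $C(\cdot)$ is not odd — yields $\theta_1^{\,n_0-1}=\theta_2^{\,n_0-1}$ with $n_0-1$ odd, whence $\theta_1=\theta_2$ by injectivity of odd powers on $\mathbb{R}$. The residual case $\theta_1=0$, $\theta_2\neq0$ is excluded by the same comparison, as it would force $a_n=0$ for all $n\ge2$, i.e. $C(\theta)=a_1\theta$, which is odd.

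The tail estimate is routine; the heart of the argument, and its main obstacle, is the converse direction. Within it the delicate points are the rigorous justification of the coefficient extraction and the bookkeeping that converts ``some even coefficient is nonzero'' into the injectivity of an odd-power map, together with checking that $C(\theta)\neq0$ on $(s^*,s)\setminus\{0\}$ — which follows from the sign analysis in the proof of Proposition \ref{prop1} — so that every displayed ratio is well defined. I would structure the write-up so that the necessity direction and the edge case $\theta=0$ are dispatched quickly, leaving the analytic coefficient-matching step as the core of the proof.
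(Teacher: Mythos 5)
Your proof is correct, and it reaches the paper's endgame by the same route but with a different (and more rigorous) opening move. The paper's proof also reduces to the coefficient identity $(\theta_1/\theta_2)^n=C(\theta_1)/C(\theta_2)$ for all $n$ with $a_n>0$, obtained by matching powers of $e^{-(y/\lambda)^\alpha}$ in the cdf (\ref{cdf}), and then invokes the same parity argument: an even and an odd index in $K=\{n:a_n>0\}$ force $\theta_1=\theta_2$, while an all-odd $K$ is exactly the odd-$C$ case; your use of $n=1$ (available since $a_1>0$) against an even $n_0$ with $a_{n_0}\neq0$ is the same step made explicit. Where you diverge is in pinning down $(\lambda,\alpha)$: the paper writes $\theta_2=m\theta_1$, expands both densities via (\ref{expansao_dens}) into a double sum, and ``takes the first term of the polynomial in $\theta_1$'' --- but since $\theta_1$ is a fixed number rather than a free variable, matching coefficients of that polynomial at a single point is only heuristic; the rigorous content behind it is that the $n=1$ Weibull component dominates as $y\to\infty$, which is precisely what your tail expansion $-\log f(y)=(y/\lambda)^\alpha-(\alpha-1)\log y+c+o(1)$ formalizes (using $C'(\theta S(y))\to a_1>0$). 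Your version also buys two things the paper leaves implicit: an explicit identity-theorem justification for the coefficient extraction (radius of convergence $s/|\theta_i|>1$ after the substitution $u=S(y)$), and a clean treatment of the boundary cases $\theta_i=0$, including the degenerate outcome $C(\theta)=a_1\theta$ (which is odd, so consistent with the statement); the paper's appendix proof never addresses $\theta=0$, even though the Weibull is a member of the EWPS family by Definition \ref{def1}. The paper's presentation is shorter; yours is self-contained and closes the gaps.
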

\begin{proof} See Appendix.\end{proof}

When $C(\cdot)$ is an odd function, $f(y;\lambda,\alpha,\theta)=f(y;\lambda,\alpha,-\theta)$ for any $\theta\in(s^*,0)$. This is the only case in which the extension of the WPS distributions proposed in this paper may not be an advantage. In cases 1 through 5 in Table 1, $C(\cdot)$ is clearly not odd.

Some EWPS distributions exhibit a physical characterization when $\theta<0$. This characterization, however, is not based on a series system but on a parallel system.
Consider a parallel system with a random number $N$ of components and let $Z_1, Z_2\ldots$ be iid random variables, where $Z_i\sim \mbox{Weibull}(\lambda,\alpha)$ is the time to the failure of the $i$-th component. For $\theta<0$, under some conditions for the law of $N$ (see Proposition \ref{prop5}), the marginal distribution of $\max\{Z_1,\cdots,Z_N\}$ is an EWPS distribution. In other words, for $\theta>0$, the EWPS interpretation is based on the first component failure, whereas for some EWPS distributions when $\theta<0$, we have a characterization that considers the failure of the system when all of its components fail. 
%\end{remark}
In the next proposition, we give a sufficient condition for the existence of a parallel system characterization for the EWPS distributions when $\theta\in (s^*,0)$.

\begin{proposition}\label{prop5}
Let $Z_1, Z_2,\ldots$ be iid random variables with $Z_1\sim {\rm Weibull}(\lambda,\alpha)$. %Let $a_i>0$ $\forall i\leq k$ and $a_i=0$ $\forall i>k$, for some $k\in\mathbb{Z}$ ($k$ may be $\infty$) and. 
Let $N\sim{\rm PS}(t(\theta);C)$, where $t:(s^*,0)\rightarrow (0,s)$ and $s^*$ is given in (\ref{def_s}). Assume that $t(\cdot)$ satisfies 
\begin{equation}\label{cond1}
C^{(i)}(t(\theta))=\frac{(-1)^{i-1}a_{i}i!\theta^i C(t(\theta))}{t(\theta)^iC(\theta)},\,\,\mbox{for}\quad i\in \mathbb{N},
\end{equation} 
where $$C^{(i)}(\tau)=\frac{d^i C(\tau)}{d\tau^i}=\sum_{n=1}^\infty \frac{n!}{(n-i)!}a_n\tau^{n-i}, \quad \mbox{for}\quad \tau\in (-s,s).$$ 
Then, if $\theta\in (s^*,s)$, the marginal density function of $Z=\max\{Z_1,\ldots,Z_N\}$ is given in (\ref{dens_WPS}).
\end{proposition}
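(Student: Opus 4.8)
The plan is to compute the law of the maximum directly by conditioning on $N$ and then to reconcile the resulting density with (\ref{dens_WPS}) by means of the hypothesis (\ref{cond1}). First I would observe that, conditionally on $N=n$, the variable $Z=\max\{Z_1,\ldots,Z_N\}$ is the maximum of $n$ iid Weibull variables, so its conditional cdf is $[1-S(z;\lambda,\alpha)]^n$, where $S$ is the Weibull survival function. Writing $u=S(z;\lambda,\alpha)$ and summing against the pf of $N\sim\mathrm{PS}(t(\theta);C)$ gives
\begin{equation*}
P(Z\le z)=\sum_{n=1}^\infty (1-u)^n\,\frac{a_n\,t(\theta)^n}{C(t(\theta))}=\frac{C\bigl((1-u)t(\theta)\bigr)}{C(t(\theta))},
\end{equation*}
where the last equality is the defining series (\ref{fpseries}) evaluated at $(1-u)t(\theta)$, a point that lies in $(0,s)$ because $u\in(0,1)$ and $t(\theta)\in(0,s)$, so convergence is not an issue.

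Next I would differentiate this cdf in $z$. Since $\tfrac{d}{dz}\bigl[1-S(z;\lambda,\alpha)\bigr]=g(z;\lambda,\alpha)$, term-by-term differentiation of the power series (legitimate inside the radius of convergence) yields the density of the maximum,
\begin{equation*}
f_Z(z)=\frac{t(\theta)\,g(z;\lambda,\alpha)}{C(t(\theta))}\,C'\bigl((1-u)t(\theta)\bigr).
\end{equation*}
The whole problem then reduces to the single identity
\begin{equation*}
\frac{t(\theta)}{C(t(\theta))}\,C'\bigl((1-u)t(\theta)\bigr)=\frac{\theta}{C(\theta)}\,C'(\theta u),\qquad u\in(0,1),
\end{equation*}
since matching this would turn $f_Z$ into exactly (\ref{dens_WPS}) evaluated at the original parameter $\theta$.

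To establish the identity I would expand both sides as power series in $u$ and match coefficients. On the right, $C'(\theta u)=\sum_{n\ge1} n a_n\theta^{n-1}u^{n-1}$, so the coefficient of $u^{k}$ equals $(k+1)a_{k+1}\theta^{k+1}/C(\theta)$. On the left, I would Taylor-expand $C'$ about $t(\theta)$: because $\tfrac{d^{j}}{du^{j}}C'\bigl((1-u)t(\theta)\bigr)=(-t(\theta))^{j}C^{(j+1)}\bigl((1-u)t(\theta)\bigr)$, evaluating at $u=0$ shows that the coefficient of $u^{k}$ is $(-1)^{k}t(\theta)^{k+1}C^{(k+1)}(t(\theta))/\bigl[k!\,C(t(\theta))\bigr]$. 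Equating the two coefficients and setting $i=k+1$ rearranges precisely into (\ref{cond1}), so the hypothesis is exactly what is needed for the two series to agree term by term on $(0,1)$, hence for all $z>0$.

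The conditioning step and the differentiation are routine; the one point requiring care is recognizing that the Taylor coefficients of $C'\bigl((1-u)t(\theta)\bigr)$ are governed by the higher derivatives $C^{(i)}(t(\theta))$, and that (\ref{cond1}) is the tailor-made condition converting those derivatives into the coefficients $a_i\theta^i$ of the target series-type density. I would also verify the admissibility conditions: $t(\theta)\in(0,s)$ guarantees that $N$ is a genuine $\mathrm{PS}$ variable and keeps every argument of $C$ and $C'$ strictly inside the radius of convergence, which justifies all interchanges of summation, differentiation, and Taylor expansion used above.
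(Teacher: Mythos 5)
Your proof is correct and follows essentially the same route as the paper: both condition on $N$ to obtain $P(Z\le z)=C\bigl(t(\theta)(1-u)\bigr)/C\bigl(t(\theta)\bigr)$ with $u=S(z;\lambda,\alpha)$, then re-expand the series about $t(\theta)$ in powers of $u$ (the paper's binomial expansion and sum interchange is exactly your Taylor re-expansion), with (\ref{cond1}) serving in both cases as precisely the condition that matches the resulting coefficients to those of the EWPS law. The only cosmetic difference is that the paper performs the matching at the cdf level, whereas you differentiate first and match Taylor coefficients of $C'$ at the density level; the underlying computation is the same.
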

\begin{proof} See Appendix.\end{proof}

Proposition \ref{prop5} states that there is a parallel system characterization for some EWPS distributions when $\theta\in (s^*,0)$. For case 1 in Table \ref{tab1}, which refers to the Poisson distribution when $\theta>0$, the conditions of Proposition \ref{prop5} are satisfied by taking $t(\theta)=-\theta$, $\theta<0$. The same also occurs for cases 2 and 3 (which refer to the logarithmic and geometric distributions when $\theta>0$, respectively) by taking $t(\theta)=\theta/(\theta-1)$ for $\theta\in(-1,0)$. %For case 4 and 6, there is no parallel system characterization.

In Proposition \ref{prop4}, we provide some necessary conditions for the power series under which the corresponding EWPS distribution admits a parallel system characterization when $\theta<0$.

\begin{proposition}\label{prop4}
Let $t:(s^*,0)\rightarrow (0,s)$. If the density function in (\ref{dens_WPS}) for $\theta<0$ can be obtained through a parallel system characterization by taking $N\sim {\rm PS}(t(\theta);C)$, then the following statements hold:
\begin{description}
\item[(i)] If $a_n=0$, then $a_{m}=0$ for $m>n$;
\item[(ii)] $t(\cdot)$ is the unique solution of $E(N)={a_1\theta}/{C(\theta)};$
\item[(iii)] The function $t(\theta)$ is decreasing in $\theta$. 
\end{description}
\end{proposition}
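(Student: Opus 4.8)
My plan is to treat the hypothesis---that the density in (\ref{dens_WPS}) for $\theta<0$ arises from a parallel system with $N\sim\mbox{PS}(t(\theta);C)$---as equivalent to the family of identities (\ref{cond1}), read in the reverse direction to Proposition \ref{prop5}. The parallel characterization asserts that the cdf of $\max\{Z_1,\dots,Z_N\}$, namely $C(t(\theta)(1-S(y)))/C(t(\theta))$, equals the cdf in (\ref{cdf}). Writing $u=S(y)\in(0,1)$, Taylor-expanding $C(t(\theta)(1-u))$ about $u=0$, and equating the coefficients of $u^i$ on both sides yields exactly (\ref{cond1}) for every $i\in\mathbb{N}$; this is the computation underlying Proposition \ref{prop5}. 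I would also record two monotonicity facts used repeatedly. For a power series with nonnegative coefficients the mean function $\mu(\tau)=\tau C'(\tau)/C(\tau)$ satisfies $\mu'(\tau)=\mathrm{Var}(N)/\tau>0$ on $(0,s)$, so $\mu$ is strictly increasing; applying the same argument to $C'(\tau)=\sum_{m\ge 0}(m+1)a_{m+1}\tau^m$, whose coefficients are again nonnegative with $a_1>0$, shows that $h(\tau)=\tau C''(\tau)/C'(\tau)$ is strictly increasing on $(0,s)$ as well.

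For part (ii) I would take $i=1$ in (\ref{cond1}), which reads $C'(t(\theta))=a_1\theta\,C(t(\theta))/(t(\theta)C(\theta))$, and substitute it into $E(N)=\mu(t(\theta))=t(\theta)C'(t(\theta))/C(t(\theta))$; the factors cancel to give $E(N)=a_1\theta/C(\theta)$, so $t(\theta)$ solves the stated equation. Uniqueness is immediate because $\mu$ is strictly increasing, hence injective, on $(0,s)$. This same relation exhibits $t=\mu^{-1}\!\circ R$ with $R(\theta)=a_1\theta/C(\theta)$ continuous, so $t$ is continuous and is non-constant unless $R$ is constant, i.e. unless $C$ is linear.

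For part (i), suppose $a_n=0$. Taking $i=n$ in (\ref{cond1}) makes the right-hand side vanish, so $C^{(n)}(t(\theta))=0$ for all $\theta\in(s^*,0)$; that is, $C^{(n)}$ vanishes on the whole image of $t$. Since $t$ is continuous and (except in the degenerate linear case, where (i) holds trivially) non-constant, this image is a nondegenerate subinterval of $(0,s)$ and thus has accumulation points in $(-s,s)$. The real-analytic function $C^{(n)}$ therefore vanishes identically on $(-s,s)$ by the identity theorem, forcing $m!\,a_m/(m-n)!=0$, i.e. $a_m=0$, for all $m\ge n$; in particular $a_m=0$ for $m>n$.

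Part (iii) is the crux, and I expect the main obstacle to be that the naive route---showing $R(\theta)=a_1\theta/C(\theta)$ is decreasing directly---stalls, because for $\theta<0$ the relevant quantity $\sum_{n\ge2}(n-1)a_n\theta^n$ is an alternating series of indeterminate sign. Instead I would invert the relation using the first two identities: dividing the $i=2$ case of (\ref{cond1}) by the $i=1$ case cancels $C(t(\theta))/C(\theta)$ and yields
\[
\theta=-\frac{a_1}{2a_2}\,\frac{t(\theta)\,C''(t(\theta))}{C'(t(\theta))}=-\frac{a_1}{2a_2}\,h(t(\theta)),
\]
valid whenever $a_2>0$; by (i) the only alternative, $a_2=0$, is the degenerate linear case. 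Since $h$ is strictly increasing on $(0,s)$ and $a_1/(2a_2)>0$, the right-hand side is a strictly decreasing function of $t(\theta)$, so $\theta\mapsto t(\theta)$ is the inverse of a strictly decreasing map and is itself strictly decreasing. As a sanity check this recovers $t(\theta)=-\theta$ in the Poisson case and $t(\theta)=\theta/(\theta-1)$ in the geometric and logarithmic cases quoted after Proposition \ref{prop5}. The decisive structural observation is that $h(\tau)=\tau C''(\tau)/C'(\tau)$ is again a power-series mean and hence monotone; once this is in hand, (iii) follows cleanly.
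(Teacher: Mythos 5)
Your proof is correct, and parts (i)--(ii) travel the same road as the paper: the paper also equates the coefficients of the two polynomial expansions in $e^{-y}$ (its displays (\ref{exp1}) and (\ref{exp2})), which is exactly the family (\ref{cond1}) you take as your starting point, and its uniqueness claim in (ii) is the same monotonicity of the power-series mean $\mu(\tau)=\tau C'(\tau)/C(\tau)$ that you prove via $\mu'(\tau)=\mathrm{Var}(N)/\tau$. Two genuine differences are worth recording. First, for (i) you are heavier than necessary: once $C^{(n)}(t(\theta))=0$ at even a \emph{single} point $t(\theta)\in(0,s)$, the nonnegativity of the coefficients $m!\,a_m/(m-n)!\geq 0$ already forces $a_m=0$ for all $m\geq n$, so the continuity/non-constancy of $t$ and the identity theorem can be dispensed with entirely; this is presumably the one-line implication the paper intends. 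Second, and in your favor, your treatment of (iii) is a genuinely different and tighter argument than the paper's. The paper simply asserts that $\theta/C(\theta)$ is decreasing on $(s^*,0)$ and concludes from $E(N)=a_1\theta/C(\theta)$ with $E(N)$ increasing in $t(\theta)$; but, as you correctly diagnose, that assertion is not true for an arbitrary power series with nonnegative coefficients (e.g.\ $C(\theta)=\theta+\theta^3$ gives $\theta/C(\theta)=1/(1+\theta^2)$, increasing on negative $\theta$), and the paper offers no argument that the parallel-characterization hypothesis rescues it. Your ratio of the $i=2$ and $i=1$ identities in (\ref{cond1}), yielding $\theta=-\bigl(a_1/(2a_2)\bigr)\,t(\theta)C''(t(\theta))/C'(t(\theta))$ with $\tau\mapsto \tau C''(\tau)/C'(\tau)$ strictly increasing as the mean of the power-series law attached to $C'$, closes exactly this gap (and, as a byproduct, \emph{proves} the paper's asserted monotonicity of $\theta/C(\theta)$ under the hypothesis, since then $\mu(t(\theta))$ is decreasing in $\theta$); your recovery of $t(\theta)=-\theta$ and $t(\theta)=\theta/(\theta-1)$ in the Poisson, geometric, and logarithmic cases confirms the computation. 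The only shared blemish is the degenerate linear case $C(\theta)=a_1\theta$, where $N\equiv 1$, $E(N)=1$, and uniqueness in (ii) fails vacuously; you flag this for (i) and (iii) but, like the paper, leave it implicit in (ii).
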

\begin{proof} See Appendix.\end{proof}

The advantages of the parameter space extension of the WPS distributions go beyond the possibility of a parallel system characterization. For example, the hazard function of the EWPS distributions for $\theta<0$ may have shapes that the EWPS distributions for $\theta>0$ do not allow. The hazard function $r(\cdot;\lambda,\alpha,\theta)$ of the EWPS$(\lambda,\alpha,\theta;C)$ distribution is given by 

\begin{equation*}
r(y;\lambda,\alpha,\theta)=\left\{
\begin{array}[pos]{ll}
\dfrac{\alpha \theta y^{\alpha-1} }{\lambda^\alpha}\exp\left\{-\left(\dfrac{y}{\lambda}\right)^\alpha\right\}
\dfrac{C'\left(\theta \exp\left\{-\left(\dfrac{y}{\lambda}\right)^\alpha\right\}\right)}{C\left(\theta \exp\left\{-\left(\dfrac{y}{\lambda}\right)^\alpha\right\}\right)},& \ \theta\neq 0,\\
\dfrac{\alpha}{ \lambda^{\alpha}}y^{\alpha-1} ,& \ \theta=0,
\end{array}\right.
\end{equation*}

\noindent for $y>0,$ $\lambda>0,$ $\alpha>0$, and $\theta\in(s^*,s)$. Figure \ref{haz_fig} presents plots of the hazard function of the extended Weibull Poisson law for some choices of $\theta$, $\lambda$, and $\alpha$. 

The plots illustrate possible behaviors of the EWP hazard functions: J, inverted J, uniform, monotone, and non-monotone shapes. Note that the plotted curves that correspond to $\theta<0$ ($\theta>0$) are uniformly above (below) the Weibull hazard function. 

\begin{figure}[h]
	\centering
  \includegraphics[width=.330\textwidth]{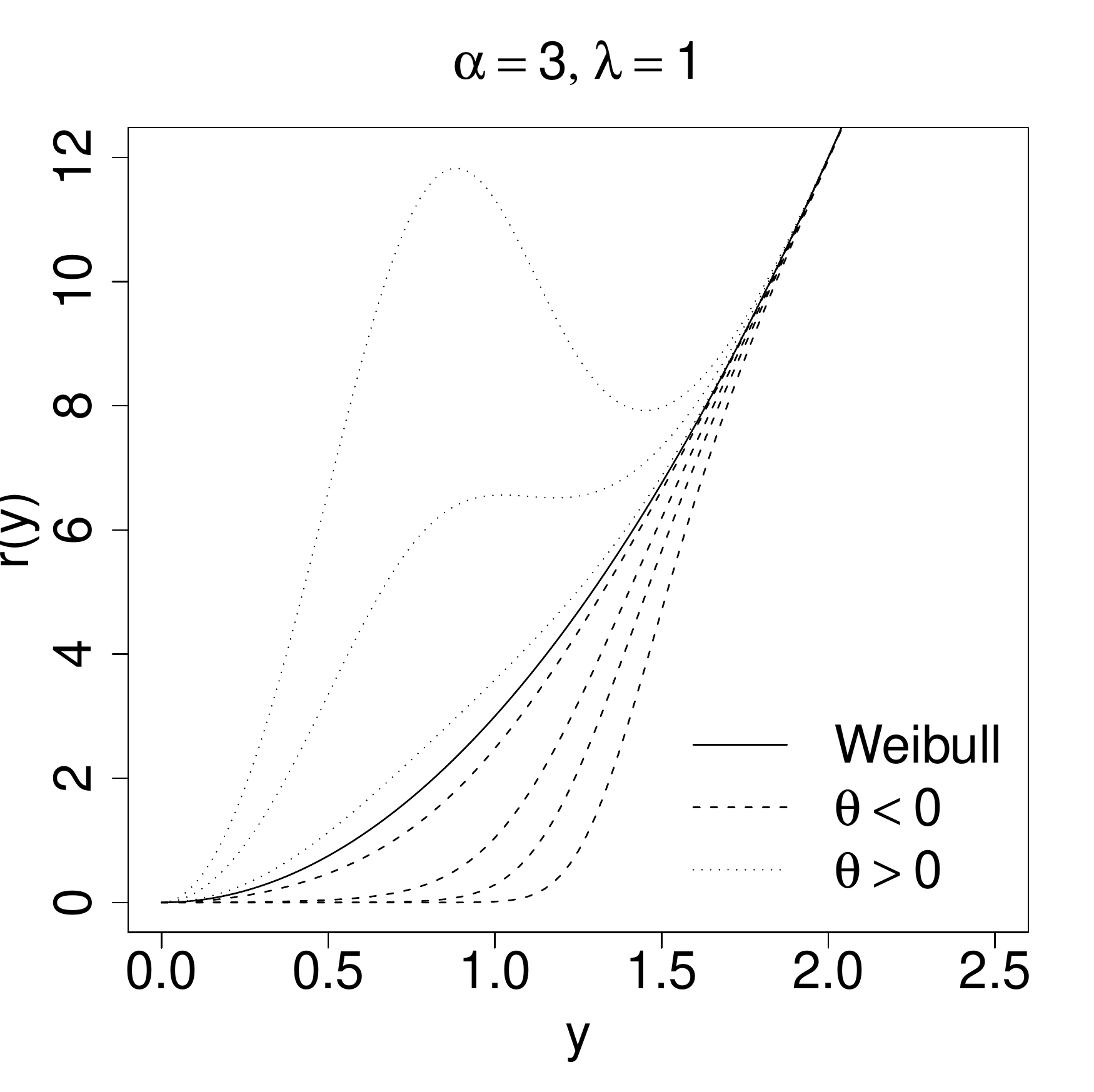}\includegraphics[width=.330\textwidth]{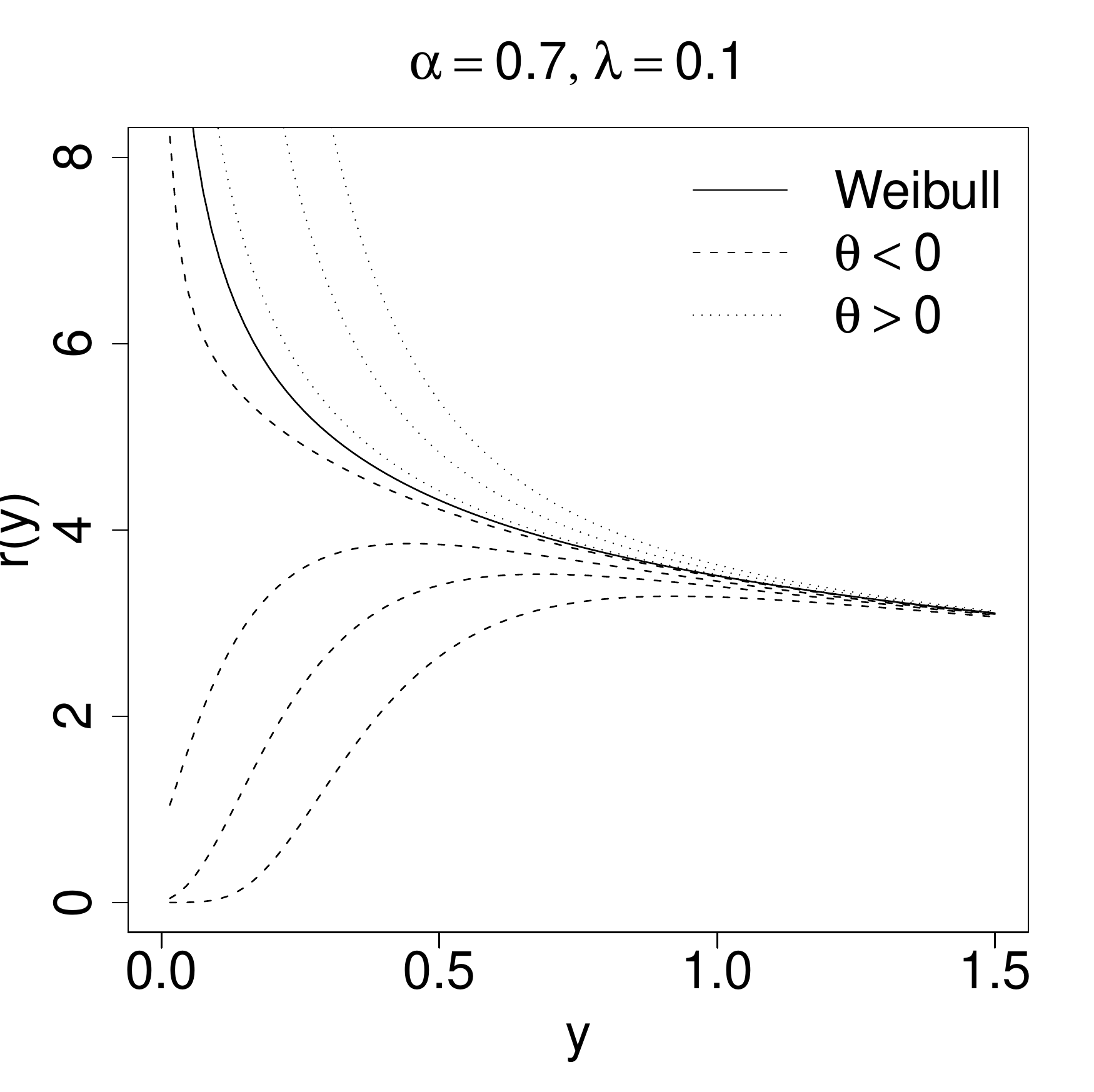}\includegraphics[width=.330\textwidth]{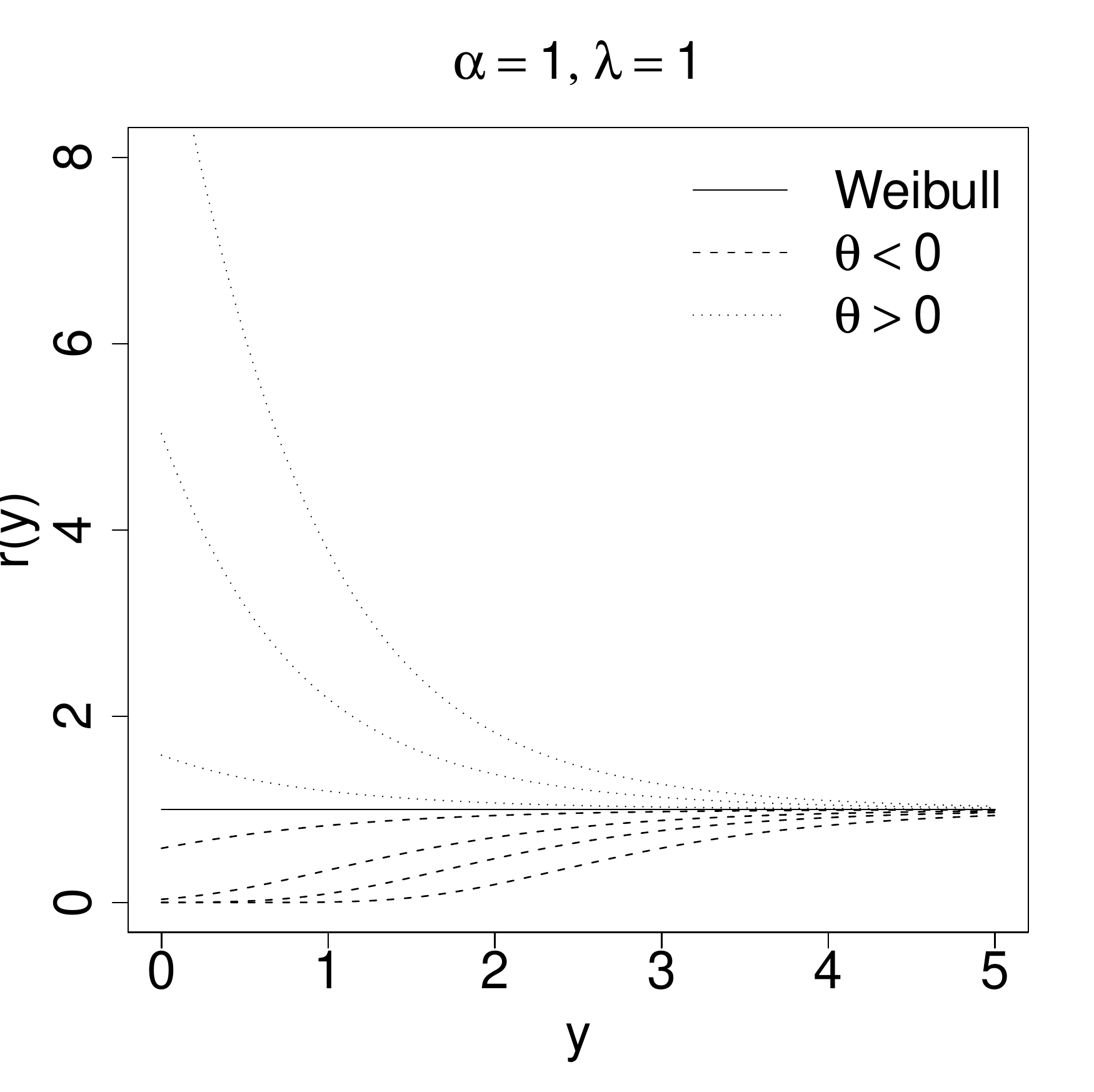}
	\label{haz_fig}\caption{Hazard functions of the EWP distribution for some values of $\theta$, $\lambda$, and $\alpha$.}
\end{figure}

For the EWP and some other EWPS distributions, there is an order of the hazard function regarded as a function of $\theta$ that explains the behavior of the curves presented in Figure 2. In Proposition 2.9, we give stochastic order and hazard rate order properties of EWPS distributions. %We first recall some definitions about stochastic order.

\begin{definition}\label{def0}

\begin{description}
\item[(i)] A random variable $X_1$ is said to be smaller than $X_2$ in the stochastic order, and we write $X_1\leq_{_{st}}X_2$, if $P(X_1>x)\leq P(X_2>x)$, $\forall x\in\mathbb{R}$. 
\item[(ii)] Let $X_1$ and $X_2$ be non-negative random variables with absolutely continuous distribution functions. Let $r_1(\cdot)$ and $r_2(\cdot)$ be their respective hazard functions. If $r_1(x)\geq r_2(x)$ $\forall x\in\mathbb{R}$, $X_1$ is said to be smaller than $X_2$ in the hazard rate order, and we write $X_1\leq_{_{hr}}X_2$. 
\end{description}
\end{definition}

\begin{proposition}\label{prop7}
Let $Y_{\theta_1}\sim {\rm EWPS}(\lambda,\alpha,\theta_1;C)$ and $Y_{\theta_2}\sim {\rm EWPS}(\lambda,\alpha,\theta_2;C)$ with $s^*<\theta_1<\theta_2<s$. If
\begin{description}
\item [(i)] $0\leq\theta_1<\theta_2$ or
\item [(ii)] $\theta_1<0$ and ${\rm EWPS}(\lambda,\alpha,\theta;C)$ admits a parallel system characterization when $\theta<0$,
\end{description}
then $Y_{\theta_2}\leq_{st}Y_{\theta_1}$ and $Y_{\theta_2}\leq_{hr}Y_{\theta_1}$.
\end{proposition}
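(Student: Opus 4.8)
The plan is to prove the stronger hazard rate order and then deduce the stochastic order from the standard implication $\leq_{hr}\Rightarrow\leq_{st}$. The first move is to reduce the hazard rate order to the monotonicity of a single auxiliary function. Writing $u=S(y;\lambda,\alpha)=\exp\{-(y/\lambda)^\alpha\}\in(0,1)$ and $\psi(\tau)=\tau C'(\tau)/C(\tau)$, the hazard function displayed above factors as $r(y;\lambda,\alpha,\theta)=(\alpha y^{\alpha-1}/\lambda^\alpha)\,\psi(\theta u)$, where the prefactor is exactly the Weibull hazard and is free of $\theta$. Since $y\mapsto u$ is a decreasing bijection from $(0,\infty)$ onto $(0,1)$ and multiplication by $u>0$ preserves the inequality $\theta_1<\theta_2$, the required relation $r(y;\lambda,\alpha,\theta_2)\ge r(y;\lambda,\alpha,\theta_1)$ for all $y>0$ is equivalent to $\psi(\theta_2 u)\ge\psi(\theta_1 u)$ for all $u\in(0,1)$, which follows once $\psi$ is shown to be nondecreasing on $(s^*,s)$. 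Note that the arguments $\theta_1 u$ and $\theta_2 u$ stay inside $(s^*,s)$, so global monotonicity of $\psi$ covers every sub-case, including the one where the interval straddles $0$.

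On $(0,s)$ the monotonicity is the classical fact for power series laws: for $\tau>0$ we have $\psi(\tau)=E(N)$ with $N\sim\mbox{PS}(\tau;C)$, and a direct computation gives $\tau\psi'(\tau)=\mathrm{Var}(N)\ge0$, so $\psi$ is nondecreasing there. A short expansion shows $\psi(\tau)\to1$ as $\tau\to0$ from either side, so $\psi$ extends continuously to $0$ with $\psi(0)=1$, matching the Weibull ($\theta=0$) case.

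The substantive step is the monotonicity on $(s^*,0)$, where the identity $\tau\psi'(\tau)=\mathrm{Var}(N)$ is no longer available because the coefficients $a_n\theta^n$ change sign. Here I would exploit the parallel-system characterization assumed in (ii). If $Y_\theta=\max\{Z_1,\ldots,Z_N\}$ with $N\sim\mbox{PS}(t(\theta);C)$, then conditioning on $N$ gives the survival identity
\begin{equation*}
\frac{C(\theta u)}{C(\theta)}=1-G_N(1-u),\qquad u\in(0,1),
\end{equation*}
where $G_N(s)=C(s\,t(\theta))/C(t(\theta))$ is the probability generating function of $N$. Differentiating in $u$ and evaluating at $u=1$ yields the duality relation $\psi(\theta)=G_N'(0)=P(N=1)=a_1 t(\theta)/C(t(\theta))$. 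Now $P(N=1)=a_1\tau/C(\tau)$ is decreasing in the power series parameter $\tau=t(\theta)$, since its derivative equals $(a_1/C(\tau))(1-E(N))\le0$ as $E(N)\ge1$ for a zero-truncated law, while $t(\cdot)$ is decreasing in $\theta$ by Proposition \ref{prop4}(iii). Composing a decreasing function with a decreasing function, $\theta\mapsto\psi(\theta)$ is nondecreasing on $(s^*,0)$.

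Combining the three pieces, $\psi$ is nondecreasing on all of $(s^*,s)$, which gives $Y_{\theta_2}\leq_{hr}Y_{\theta_1}$ in both cases, and the stochastic order follows at once. I expect the negative branch to be the main obstacle: the key insight is to monotonize the function $\psi$ rather than the hazard directly, and to recognize that the parallel characterization collapses $\psi(\theta)$ to the single probability $P(N=1)$, which is exactly what makes Proposition \ref{prop4}(iii) usable. As a consistency check, the conclusion can also be read off probabilistically: for $\theta<0$ the law $\mbox{PS}(t(\theta);C)$ is stochastically decreasing in $\theta$ because $t$ decreases, and the maximum of a stochastically larger number of iid components is stochastically larger, so $Y_{\theta_2}\leq_{st}Y_{\theta_1}$ directly; the series-system picture handles $\theta>0$, with the Weibull law at $\theta=0$ as the intermediate case.
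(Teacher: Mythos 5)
Your proof is correct, and it takes a genuinely different route from the paper's. The paper's proof is transform-based: for $0<\theta_1<\theta_2$ it orders the moment generating functions of $N_{\theta_i}\sim{\rm PS}(\theta_i;C)$ and invokes Theorem 5.1 of Shaked and Wong (1997) on stochastic orders defined through ratios of Laplace transforms to order the random minima, handles $\theta_1=0$ by weak convergence of $Y_\theta$ to the Weibull law as $\theta\to 0^{+}$, and disposes of $\theta_1<0$ with ``the proof follows similarly'' under the parallel characterization. You instead factor the hazard as $r(y;\lambda,\alpha,\theta)=r_0(y;\lambda,\alpha)\,\psi(\theta u)$ with $\psi(\tau)=\tau C'(\tau)/C(\tau)$ and $u=e^{-(y/\lambda)^{\alpha}}$, and reduce both orders to monotonicity of $\psi$ on $(s^*,s)$: the variance identity $\tau\psi'(\tau)={\rm Var}(N)$ on $(0,s)$, the duality $\psi(\theta)=P(N=1)=a_1t(\theta)/C(t(\theta))$ together with Proposition \ref{prop4}(iii) on $(s^*,0)$, and continuity $\psi(0^{\pm})=1$ to glue across zero. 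The individual steps check out: $\theta_i u$ stays in $(s^*,s)$ for $u\in(0,1)$; the derivative of $a_1\tau/C(\tau)$ equals $(a_1/C(\tau))(1-E(N))\le 0$; and your duality relation is exactly the mirror of Proposition \ref{prop4}(ii), obtained by differentiating the survival identity at $u=1$ rather than $u=0$, so it is fully consistent with the paper's necessary conditions. Your route buys three things: it is self-contained (no appeal to an external order-theoretic theorem); it treats $\theta_1=0$ rigorously via continuity of $\psi$, whereas hazard rate order is not preserved under weak convergence in general, so the paper's limiting step implicitly needs the additional observation that the hazards converge pointwise; and it makes explicit the content hidden in the paper's ``follows similarly'' for $\theta_1<0$, which is precisely where Proposition \ref{prop4}(iii) must enter. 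As a by-product, $\psi\ge 1$ on $(0,s)$ and $\psi\le 1$ on $(s^*,0)$ makes transparent that the $\theta>0$ hazards lie uniformly above, and the $\theta<0$ hazards uniformly below, the Weibull hazard, in agreement with the claim at the start of Section 3 (and suggesting the above/below annotation following Figure 2 has the roles of $\theta<0$ and $\theta>0$ reversed). What the paper's shorter proof buys is leverage from a general theorem about random minima applicable to arbitrary baseline laws, though your argument shares this generality in practice, since it uses the Weibull only through the substitution $u=S(y;\lambda,\alpha)$.
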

\begin{proof} See Appendix.\end{proof}

For $\theta\geq 0$ and fixed values for $\lambda$ and $\alpha$, there is always a stochastic order and a hazard rate order. In this case, which corresponds to the WPS distributions, the hazard function is uniformly below the hazard function of the Weibull law. For $\theta<0$, depending on the choice of (\ref{fpseries}), these orders still hold, and the hazard rate function of the EWPS is uniformly above the hazard function of the Weibull distribution. This result emphasizes the relevance of extending the parameter space of $\theta$. 

Note that the hazard functions plotted in Figure 2 are close to the hazard function of the Weibull distribution for large values of $y$. This is a valid property for all EWPS distributions as stated in the following proposition.

\begin{proposition} Let $r_0(\cdot;\lambda,\alpha)$ be the hazard function of the ${\rm Weibull} (\lambda,\alpha)$ distribution, and let $r(\cdot;\lambda,\alpha,\theta)$ be the hazard function of the ${\rm EWPS}(\lambda,\alpha,\theta;C)$ distribution. Then, $|r(y;\lambda,\alpha,\theta)-r_0(y;\lambda,\alpha)|\rightarrow 0$ as $y\rightarrow \infty$ for all $\lambda,\alpha>0$ and $\theta\in(s^*,s)$.
\end{proposition}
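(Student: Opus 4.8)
The plan is to work directly with the explicit formula for $r(y;\lambda,\alpha,\theta)$ and to exploit the fact that the argument $\exp\{-(y/\lambda)^{\alpha}\}$ of $C$ and $C'$ collapses to $0$ as $y\to\infty$, where the power series $C$ is controlled by its leading coefficient $a_1>0$. The case $\theta=0$ is immediate, since then $r\equiv r_0$, so I assume $\theta\neq 0$ throughout. Writing $u=u(y)=\exp\{-(y/\lambda)^{\alpha}\}$, which tends to $0^{+}$ as $y\to\infty$, and recalling $r_0(y;\lambda,\alpha)=\alpha\lambda^{-\alpha}y^{\alpha-1}$, I would first factor the difference as
\begin{equation*}
r(y;\lambda,\alpha,\theta)-r_0(y;\lambda,\alpha)=\frac{\alpha y^{\alpha-1}}{\lambda^{\alpha}}\left[\frac{\theta u\,C'(\theta u)}{C(\theta u)}-1\right]=\frac{\alpha y^{\alpha-1}}{\lambda^{\alpha}}\,h(\theta u),
\end{equation*}
where $h(\tau)=\tau C'(\tau)/C(\tau)-1$.

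The key estimate is a sharp bound on $h(\tau)$ near $\tau=0$. Using $C(\tau)=\sum_{n\geq 1}a_n\tau^{n}$ and $\tau C'(\tau)=\sum_{n\geq 1}na_n\tau^{n}$, I would compute $\tau C'(\tau)-C(\tau)=\sum_{n\geq 2}(n-1)a_n\tau^{n}$ and hence
\begin{equation*}
h(\tau)=\frac{\sum_{n\geq 2}(n-1)a_n\tau^{\,n-1}}{\sum_{n\geq 1}a_n\tau^{\,n-1}}.
\end{equation*}
As $\tau\to 0$ the denominator converges to $a_1>0$ while the numerator is $O(\tau)$, so $h$ is continuous at $0$ with $h(0)=0$ and, more importantly, $h(\tau)=O(\tau)$. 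Consequently there exist constants $K>0$ and $\delta>0$ with $|h(\theta u)|\leq K|\theta|\,u$ whenever $|\theta|\,u<\delta$, which holds for all sufficiently large $y$. For such $y$ the argument $\theta u$ lies near $0$, where $C(\theta u)\neq 0$ by the leading-term behaviour, so $r$ is well defined and the bound is legitimate regardless of the sign of $\theta$.

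Combining the two steps yields, for all large $y$,
\begin{equation*}
\bigl|r(y;\lambda,\alpha,\theta)-r_0(y;\lambda,\alpha)\bigr|\leq\frac{\alpha K|\theta|}{\lambda^{\alpha}}\,y^{\alpha-1}\,e^{-(y/\lambda)^{\alpha}},
\end{equation*}
and the right-hand side tends to $0$ because the super-polynomial decay of $e^{-(y/\lambda)^{\alpha}}$ overwhelms the factor $y^{\alpha-1}$. I expect the main obstacle to be precisely this indeterminate form: when $\alpha>1$ the prefactor $y^{\alpha-1}$ diverges while the bracket vanishes, so merely observing $h(\theta u)\to 0$ is insufficient. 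The crux is therefore establishing the linear rate $h(\tau)=O(\tau)$ from the power series expansion, after which the exponential decay settles the limit uniformly.
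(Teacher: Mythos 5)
Your proof is correct, and it takes a route that is both genuinely different from and more complete than the paper's. The paper writes the ratio $r(y;\lambda,\alpha,\theta)/r_0(y;\lambda,\alpha)=\theta b\,C'(\theta b)/C(\theta b)$ with $b=e^{-(y/\lambda)^\alpha}$, expands the power series, and applies L'H\^opital's rule once to conclude that the ratio tends to $1$ as $b\to 0$ --- and stops there. But as you correctly identify, ratio convergence alone does not yield the stated conclusion when $\alpha>1$: in that case $r_0(y)=\alpha\lambda^{-\alpha}y^{\alpha-1}\to\infty$, so $r/r_0\to 1$ gives only the indeterminate form $\infty\cdot 0$ for the difference, and a rate is required. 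Your argument supplies exactly that rate: the series computation $h(\tau)=\bigl(\sum_{n\geq 2}(n-1)a_n\tau^{n-1}\bigr)/\bigl(\sum_{n\geq 1}a_n\tau^{n-1}\bigr)$, with the denominator tending to $a_1>0$ (which also guarantees $C(\theta u)\neq 0$ near $u=0$ for either sign of $\theta$), gives the linear bound $|h(\theta u)|\leq K|\theta|u$, and then the explicit estimate $|r-r_0|\leq \alpha K|\theta|\lambda^{-\alpha}y^{\alpha-1}e^{-(y/\lambda)^\alpha}\to 0$ settles all $\alpha>0$ at once, since the exponential dominates the polynomial factor. In effect your quantitative $O(\tau)$ bound replaces (and strengthens) the paper's qualitative L'H\^opital step, closing a gap the published proof leaves implicit; you also handle $\theta=0$ explicitly, which the paper omits as trivial.
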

\begin{proof} See Appendix.\end{proof}

\section{EWPS regression model and estimation}
 
Let $Y_1,\ldots,Y_n$ be a vector of independent random variables. Let $X=(x_1,\ldots,x_n)^\top$ be a fixed $n\times k$ matrix of covariates with 
$x_i=(x_{i1},\ldots,x_{ik})^\top$ and $\eta_i=x_i^\top\beta$ for $i=1,\ldots,n$, where $\beta \in \mathbb{R}^k$ is a vector of unknown parameters. 
The extended Weibull power series regression model is defined by
$Y_i\stackrel{ind}{\sim} {\rm EWPS}(\lambda_{_{i}},\alpha,\theta;C)$ for $i=1,\ldots,n$, with 
\begin{equation}\label{link}
h(\lambda_{_{i}})=\eta_i,
\end{equation}
where $h(\cdot)$ is an invertible and three times differentiable link function that maps $(0,\infty)$ in $\mathbb{R}$, and $\alpha>0$, $\theta\in(s^*,s)$ and $\beta$ are unknown parameters. A possible choice for $h(\cdot)$ is the logarithmic function.

We now discuss estimation by maximum likelihood in the EWPS regression model. 
Let $\mathcal{Y}=(Y_1,\ldots,Y_n)$ be a vector of $n$ independent random variables following the EWPS regression model. Let $\Theta=(\beta^\top,$ $\alpha,\theta)^\top$ be the parameter vector.
The total log-likelihood function is given by
\begin{eqnarray*}
\ell\equiv\ell(\Theta;\mathcal{Y},X)=n\log\alpha+\sum_{i=1}^n\log W_{i}- \sum_{i=1}^n W_i+\sum_{i=1}^n L_{1i}(\theta)+c,
\end{eqnarray*}
where $W_i=(Y_i/\lambda_i)^\alpha$, $c$ is a constant that does not depend on the parameters, and 
\begin{equation*}
L_{1i}(\theta)=\left\{
\begin{array}[2]{ll}
\log\left [\dfrac{C'\left(\theta e^{-W_i}\right)}{C(\theta)}\theta\right ],&\theta\neq 0,\\
0, & \theta=0.
\end{array}
\right.
\end{equation*}
The associated score vector is given by $U_n(\Theta)=(\partial\ell/\partial\beta^\top,\partial\ell/\partial\alpha,\partial\ell/\partial\theta)^\top$, 
where
\begin{eqnarray*}
\frac{\partial \ell}{\partial \beta} &=& \alpha X^\top D_1Y^*,\\
\frac{\partial \ell}{\partial \alpha}&=& \dfrac{n}{\alpha}+\dfrac{1}{\alpha}{\bf 1}^\top D_2Y^*,\\
\frac{\partial \ell}{\partial \theta}&=& \left\{
\begin{array}[2]{ll}
\dfrac{n}{\theta}-n\left(\dfrac{C'(\theta)}{C(\theta)}\right)+\dfrac{1}{\theta}{\rm tr}(D_3),&\theta\neq 0,\\
\dfrac{a_2}{a_1}{\rm tr}(2D_4-I),& \theta=0,
\end{array}
\right.
\end{eqnarray*}
In these equations, ${\bf 1}\equiv{\bf 1}_{n\times 1}$ is an $n\times 1$ column vector of ones, $I$ is the $n\times n$ identity matrix, and $D_1$, $D_2$, $D_3$ and $D_4$ are $n\times n$ diagonal matrices given by
\begin{eqnarray*}
D_1&=&\mbox{diag}\left(\frac{1}{\lambda_1}\frac{d \lambda_{1}}{d\eta_1},\ldots,\frac{1}{\lambda_n}\frac{d \lambda_{n}}{d\eta_n}\right),\\
D_2&=&\mbox{diag}\left(\log W_1,\ldots,\log W_n\right),\\
D_3&=&\mbox{diag}\left(L_{21}(\theta),\ldots,L_{2n}(\theta)\right),\\
D_4&=&\mbox{diag}\left(e^{-W_1},\ldots,e^{-W_n}\right),
\end{eqnarray*} 
where $W_i=({Y_i}/{\lambda_{i}})^\alpha$, $Y^*$ is a $n\times 1$ column vector given by $$Y^*=\left(W_1\left(1+L_{21}(\theta)\right)-1,\ldots,W_n\left(1+L_{2n}(\theta)\right)-1\right )^\top,$$ and
\begin{equation*}
L_{2i}(\theta)=\left\{
\begin{array}[2]{ll}
\dfrac{C''(\theta e^{-W_i})}{C'(\theta e^{-W_i})}\theta e^{-W_i}, & \theta\neq 0,\\
0, & \theta=0.
\end{array}\right .
\end{equation*}
The maximum likelihood estimate (MLE) of $\Theta$, which is denoted by $\widehat{\Theta}$, is obtained by solving the nonlinear system of equations $U_n(\Theta)=0$. 
For some EWPS models and some choices of the link function, the solution of this system can be simplified. For instance, if the chosen link function is the logarithmic function, $D_1$ is the $n\times n$ identity matrix.
Let
\begin{equation}\label{hessian}
K_n(\Theta)=-\frac{\partial ^2\ell}{\partial\Theta\partial\Theta^\top}
\end{equation}
be the total observed information matrix. Its elements are given by
\begin{eqnarray*}
\frac{\partial^2\ell}{\partial \beta \partial \beta^\top}&=&\alpha X^\top \left\{D_7D_8-D_1^2\left[D_8+\alpha(D_8+I)-\alpha D_5^2( D_6+ D_3)\right]\right\} X,\\
\frac{\partial^2\ell}{\partial \alpha \partial \beta^\top}&=&X^\top D_1\left\{D_8+ D_2 D_5\left[D_3+D_5\left(I-D_6\right)-D_8\right]\right\}{\bf 1},\\ %OK
\frac{\partial^2\ell}{\partial \alpha^2}&=&-\frac{1}{\alpha^2}{\rm tr}\left \{ D_2^2 \left[D_8+I-D_5^2( D_6+ D_3)\right]+I \right \},\\
\frac{\partial^2\ell}{\partial \theta \partial \beta^\top}&=&\left\{
\begin{array}[2]{ll}
\alpha \theta^{-1}X^\top D_1 D_5(D_6+D_3){\bf 1},&\theta\neq 0,\\ %OK
\dfrac{\alpha 2 a_{2} }{a_{1} }X^\top D_1 D_4 D_5 {\bf 1},&\theta=0,\\
\end{array}\right .\\
\frac{\partial^2\ell}{\partial \alpha \partial \theta}&=&
\left\{
\begin{array}[2]{ll}
-\theta^{-1}\alpha^{-1}{\rm tr}( D_2D_5(D_6+D_3)),&\theta\neq 0,\\ %OK
\dfrac{2 a_{2}}{a_{1} }{\rm tr} ( D_2D_4D_5),&\theta= 0,
\end{array}\right .\\
\frac{\partial^2\ell}{\partial \theta^2}&=&\left\{
\begin{array}[2]{ll}
-n\left[\dfrac{1}{\theta^2}+\dfrac{C''(\theta)C(\theta)-C'(\theta)^2}{C(\theta)^2}\right]+\dfrac{1}{\theta^{2}}{\rm tr}( D_6),&\theta\neq 0,\\ %OK
\dfrac{a_{2}(a_{2}-2a_{1})}{a_{1}^2}-\dfrac{4a_{2}^2}{a_{1}^2}{\rm tr}( D_4^2),&\theta=0,
\end{array}\right .
\end{eqnarray*}
where $D_5$, $D_6$, and $D_7$ are $n\times n$ diagonal matrices given by
\begin{eqnarray*}
D_5&=&\mbox{diag}\left(W_1,\ldots,W_n\right),\\
D_6&=&\mbox{diag}\left (L_{31}(\theta),\ldots,L_{3n}(\theta)\right ),\\
D_7&=&\mbox{diag}\left(\frac{1}{\lambda_1}\frac{d^2 \lambda_{1}}{d\eta_1^2},\ldots,\frac{1}{\lambda_n}\frac{d^2 \lambda_{n}}{d\eta_n^2}\right),\\
D_8&=&\mbox{diag}\left(W_1\left(1+L_{21}(\theta)\right)-1,\ldots,W_n\left(1+L_{2n}(\theta)\right)-1\right ),
\end{eqnarray*} 
and 
\begin{equation*}
L_{3i}(\theta)=\left\{
\begin{array}[2]{ll}
\left[\dfrac{ C'''(\theta e^{-W_i})}{C'(\theta e^{-W_i})}-\left(\dfrac{C''(\theta e^{-W_i})}{C'(\theta e^{-W_i})}\right)^2\right]\theta^2 e^{-2W_i},& \theta\neq 0\\
0, &\theta=0.
\end{array}\right .
\end{equation*}
For the EWP distribution, $D_6$ is
the $n\times n$ matrix of zeros. If the chosen link function is the logarithmic function, $D_7$ is the $n\times n$ identity matrix.

We now state the consistency and asymptotic normality of the ML estimator.

\begin{lemma}\label{prop_cond}
Let $\Omega\subset \mathbb{R}^{k+2}$ be the parameter space, and let $\omega\subset \Omega$ be an open region that contains the true parameter vector $\Theta^{(0)}=(\beta^\top_0,\alpha_0,\theta_0)$. Then, the following statements hold.
\begin{enumerate}
\item[(i)] Each third-order derivative of the log-likelihood function exists and is dominated by an integrable function that does not depend on the parameters for all $\Theta\in\omega$.
\item[(ii)] $$E(Y^*)={\bf 0}_n, \quad E({\bf 1}^\top D_1 Y^*)=\frac{n}{\alpha} \quad \mbox{and}\quad E\left(\frac{C''\left(\theta e^{-W}\right) e^{-W}}{C'\left(\theta e^{-W}\right)}\right)=\frac{C'(\theta)}{C(\theta)}-\frac{1}{\theta},$$ where ${\bf 0}_n$ is an $n\times 1$ column vector of zeros. The expected value of each element of the score vector is then zero.
\item[(iii)] The integrals $E\left(U_n(\Theta)\, U_n(\Theta)\right)$ can be differentiated under the integral sign for all $\Theta\in\omega$; hence, $E(U_n(\Theta)U_n(\Theta)^\top)=E(K_n(\Theta))$.
\item[(iv)] \begin{eqnarray*}
\frac{1}{n}U_n(\Theta^{(0)})\rightarrow  {\bf 0}_{k+2} \quad \mbox{in probability, and} \quad
\frac{1}{n}K_n(\Theta^{(0)})\rightarrow  J \quad \mbox{in probability},
\end{eqnarray*}
\noindent where $J$ is a $(k+2)\times(k+2)$ finite matrix.
\end{enumerate}
\end{lemma}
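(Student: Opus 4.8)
The plan is to recognize that statements (i)--(iv) are precisely the Cram\'er-type regularity conditions from which the consistency and asymptotic normality of the MLE follow by a standard argument, so the entire task is to verify them for the EWPS regression model. I would organize everything around two structural facts that make the computations tractable: first, the change of variable $u=S(y;\lambda,\alpha)=\exp\{-(y/\lambda)^\alpha\}$, under which $f(y;\lambda,\alpha,\theta)\,dy=-\frac{\theta}{C(\theta)}C'(\theta u)\,du$ maps $(0,\infty)$ onto $(0,1)$; and second, the finiteness of all moments of $W_i=(Y_i/\lambda_i)^\alpha$, already noted after (\ref{expansao_dens}). Throughout I would work on a compact neighborhood $\mathcal N\subset\omega$ of $\Theta^{(0)}$ (shrinking $\omega$ if necessary), assuming the standard design condition that the covariates are bounded, so that each linear predictor $\eta_i=x_i^\top\beta$ and hence each $\lambda_i=h^{-1}(\eta_i)$ ranges over a compact set.

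For part (ii) I would compute the three expectations directly. Each is an integral of a ratio of derivatives of $C$ evaluated at $\theta e^{-W}=\theta u$ against the density, and the change of variable collapses it to an elementary integral on $(0,1)$: for instance the third identity reduces to $\frac{1}{C(\theta)}\int_0^1 \theta u\,C''(\theta u)\,du$, which integration by parts (using $C(0)=0$) evaluates to $\frac{C'(\theta)}{C(\theta)}-\frac{1}{\theta}$. The other two identities are handled the same way, and the $\theta=0$ cases follow by the limiting argument used for Proposition \ref{prop3}. Equivalently one may justify differentiating $\int_0^\infty f\,dy=1$ under the integral sign and read off $E(U_n(\Theta))=\mathbf 0$ at once; either route supplies the zero-mean property of the score.

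Part (i) is where I expect the real difficulty. I would write each third-order derivative of $\ell$ as a finite sum of products of the three ingredients $W_i$, $\log W_i$, and ratios $C^{(k)}(\theta e^{-W_i})/C'(\theta e^{-W_i})$, times derivatives of the link. The point is to bound everything uniformly over $\mathcal N$: since $\theta_0\in(s^*,s)$ is interior, $\theta$ stays in a compact subinterval of $(s^*,s)$, so by Proposition \ref{prop1} and the definition of $s^*$ in (\ref{def_s}) the argument $\theta e^{-W_i}=\theta u$ ranges over a compact subset of $(s^*,s)$ on which $C$ and $C'$ are continuous and nonvanishing; hence the ratios are bounded on $\mathcal N\times[0,1]$. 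The link-derivative factors are bounded by the three-times differentiability of $h$ and the compactness of the range of $\lambda_i$. What remains is a dominating function of the form (polynomial in $W_i$)$\times$(powers of $|\log W_i|$), whose expectation is finite because all moments of $W_i$ and $E|\log W_i|^p$ are finite. The care points are the piecewise definition at $\theta=0$, which I would treat by checking that the $\theta\neq0$ expressions extend continuously so that the bound survives the limit, and the bookkeeping forced by the non-identical distributions of the $Y_i$.

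Once (i) is secured, (iii) and (iv) are comparatively routine. For (iii) the domination of the first and second derivatives licenses differentiating $\int \partial f/\partial\Theta\,dy=\mathbf 0$ once more under the integral sign; combining the resulting $\int \partial^2 f/\partial\Theta\partial\Theta^\top\,dy=\mathbf 0$ with the identity $\partial^2\log f/\partial\Theta\partial\Theta^\top=f^{-1}\partial^2 f/\partial\Theta\partial\Theta^\top-U_nU_n^\top$ and taking expectations yields $E(U_n(\Theta)U_n(\Theta)^\top)=E(K_n(\Theta))$. For (iv), $U_n$ and $K_n$ are sums of independent (non-identically distributed) summands that are mean-zero for $U_n$ by (ii) and have variances uniformly bounded on $\mathcal N$ by the domination in (i); Chebyshev's inequality then gives $\frac1n U_n(\Theta^{(0)})\to\mathbf 0$ and $\frac1n\{K_n(\Theta^{(0)})-E(K_n(\Theta^{(0)}))\}\to\mathbf 0$ in probability. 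The only remaining ingredient, the convergence $\frac1n E(K_n(\Theta^{(0)}))\to J$ to a finite matrix, I would obtain by imposing the usual design assumption that the averaged per-observation information converges, the natural analogue of requiring $\frac1n X^\top X$ to tend to a fixed positive-definite matrix.
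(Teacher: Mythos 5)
Your proposal is correct, and its skeleton matches the paper's: bound the $C$-ratios uniformly on a compact $\theta$-interval, treat $\theta=0$ by showing the $\theta\neq 0$ expressions extend continuously (the paper does exactly this by computing the limits of its $A(\theta)$ and $B_i(\theta)$ and then invoking continuity to get constants $c_1,c_2$), and exploit that under $\Theta^{(0)}$ the score and information depend on the data only through the iid variables $W_i$. You differ in two worthwhile ways. For (ii), the paper expands $C'$ and $C''$ as power series and interchanges sum and integral termwise, whereas you substitute $u=e^{-(y/\lambda)^\alpha}$ and integrate by parts on $(0,1)$; your route is cleaner and handles all three identities uniformly --- indeed, taking $\phi(w)=w$ and $\phi(w)=w\log w$ in the same integration by parts yields both $E(Y^*)=\mathbf{0}_n$ and the $\alpha$-score identity (the paper's second display, which as printed involves $D_1$ and cannot be right since $D_1$ is deterministic and $E(Y^*)=\mathbf{0}_n$, is evidently a typo for the $D_2$-weighted version your method produces). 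For (iv), the paper simply cites ``the law of large numbers'' for the iid $W_i$, which glosses over the fact that the summands carry covariate-dependent weights and so are independent but not identically distributed; your Chebyshev argument for bounded-weight sums is the honest version, and your explicit hypothesis that $\frac{1}{n}E\left(K_n(\Theta^{(0)})\right)$ converges (the analogue of $\frac{1}{n}X^\top X$ tending to a fixed matrix) fills a genuine gap, since under condition C3 alone the deterministic averages need not converge and the existence of the finite limit $J$ is really an unstated design assumption (the paper's C4 presupposes it). Two minor points: the paper's constant dominating function $n(c_1+c_2)$ works for $\partial^3\ell/\partial\theta^3$ only because $e^{-3W_i}\leq 1$, so for the mixed derivatives that the paper dismisses as ``similar'' your polynomial-in-$W_i$ times powers-of-$|\log W_i|$ dominants, with all moments of $W_i$ and $\log W_i$ finite, are genuinely needed; and your claim that $C$ is nonvanishing on the compact set is not literally true (since $C(0)=0$) but is harmless, because $C(\theta)$ enters denominators only through combinations that you already propose to extend continuously across $\theta=0$.
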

\noindent {\it Proof: See Appendix.}

\begin{proposition} \label{prop_asymptotic}

Let $Y_i\stackrel{ind}{\sim} {\rm EWPS}(\lambda_i,\alpha,\theta;C) $, where $\lambda_i$ satisfies (\ref{link}). Let $\Omega\subset \mathbb{R}^{k+2}$ be the parameter space, and let $\omega\subset \Omega$ be an open region that contains the true parameter vector $\Theta^{(0)}=(\beta^\top_0,\alpha_0,\theta_0)$. Assume that the following conditions hold:
\begin{enumerate}
\item[C1 - ] $C(\cdot)$ is not an odd function. 
\item[C2 - ] The rank of $X$ is $k$.
\item[C3 - ] $\exists$ $m<\infty$ such that $|x_{ij}|<m$, for all $i=1,\ldots,n$ and $j=1,\ldots,k$.
\item[C4 - ] $J$ is positive definite for all $\Theta\in \omega$.
\end{enumerate}
Then, with probability tending to $1$ as $n\rightarrow \infty$, there exist solutions $\widehat\Theta_n$ of the likelihood equations such that 

\begin{enumerate}
\item $\widehat\Theta_n$ is consistent;
\item $\sqrt n (\widehat\Theta_n-\Theta)^\top \stackrel{ D}{\longrightarrow} N_{k+2}(0,J^{-1}),$
\noindent where $N_k(0,\Sigma)$ is a $k$-variate normal distribution with mean zero and covariance matrix $\Sigma$.

\end{enumerate}
\end{proposition}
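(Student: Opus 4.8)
The plan is to follow the classical Cram\'er-type argument for maximum likelihood estimation, adapted to the present setting of independent but non-identically distributed observations. All the analytic regularity needed is already collected in Lemma \ref{prop_cond}: the domination of the third-order derivatives (part (i)), the vanishing expectation of the score (part (ii)), the information identity $E(U_nU_n^\top)=E(K_n)$ (part (iii)), and the convergences $n^{-1}U_n(\Theta^{(0)})\to{\bf 0}$ and $n^{-1}K_n(\Theta^{(0)})\to J$ in probability (part (iv)). Conditions C1--C4 supply, respectively, identifiability of the model, an estimable regression structure, uniform boundedness of the covariates, and invertibility of the limiting information $J$.

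For consistency, I would work on a small sphere $\|\Theta-\Theta^{(0)}\|=\delta$ and show that, with probability tending to one, $\ell(\Theta)<\ell(\Theta^{(0)})$ for every $\Theta$ on this sphere. Expanding $\ell(\Theta)$ around $\Theta^{(0)}$ to second order with a third-order Lagrange remainder and dividing by $n$, the linear term is controlled by $n^{-1}U_n(\Theta^{(0)})\to{\bf 0}$ from Lemma \ref{prop_cond}(iv); the quadratic term behaves like $-\tfrac12\delta^2$ times a quadratic form in $n^{-1}K_n(\Theta^{(0)})\to J$, which is strictly negative for small $\delta$ because $J$ is positive definite (C4); and the remainder is $O(\delta^3)$ uniformly by the integrable domination of the third derivatives (Lemma \ref{prop_cond}(i)). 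Choosing $\delta$ small makes the negative quadratic term dominate, so $\ell$ has a local maximum in the interior of the sphere. This yields a sequence of roots $\widehat\Theta_n$ of $U_n(\Theta)=0$ that is consistent, with identifiability (C1) ensuring the interior maximizer is associated with $\Theta^{(0)}$ rather than a spurious parameter value.

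For asymptotic normality, I would expand the score, writing $0=U_n(\widehat\Theta_n)=U_n(\Theta^{(0)})-K_n(\bar\Theta_n)(\widehat\Theta_n-\Theta^{(0)})$ for an intermediate point $\bar\Theta_n$ on the segment joining $\widehat\Theta_n$ and $\Theta^{(0)}$, and rearranging to
\begin{equation*}
\sqrt n(\widehat\Theta_n-\Theta^{(0)})=\left(\tfrac1n K_n(\bar\Theta_n)\right)^{-1}\tfrac{1}{\sqrt n}U_n(\Theta^{(0)}).
\end{equation*}
Since $\widehat\Theta_n\to\Theta^{(0)}$ by consistency, $\bar\Theta_n\to\Theta^{(0)}$ as well, and the domination of the third derivatives (Lemma \ref{prop_cond}(i)) upgrades $n^{-1}K_n(\Theta^{(0)})\to J$ to $n^{-1}K_n(\bar\Theta_n)\to J$ in probability; by C4 this matrix is eventually invertible with inverse converging to $J^{-1}$. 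It then remains to show $n^{-1/2}U_n(\Theta^{(0)})\stackrel{D}{\to}N_{k+2}({\bf 0},J)$. As $U_n(\Theta^{(0)})$ is a sum of independent, mean-zero (Lemma \ref{prop_cond}(ii)) summands whose covariances add up to $E(K_n)$ (Lemma \ref{prop_cond}(iii)), this is a Cram\'er--Wold reduction to a univariate central limit theorem for independent, non-identically distributed variables. Slutsky's theorem applied to the display then gives $\sqrt n(\widehat\Theta_n-\Theta^{(0)})\stackrel{D}{\to}J^{-1}N({\bf 0},J)=N_{k+2}({\bf 0},J^{-1})$.

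The main obstacle is this central limit theorem for the score, precisely because the observations are not identically distributed: the summands carry different distributions through the distinct scale parameters $\lambda_i$, so the classical i.i.d.\ CLT does not apply. I would instead verify a Lyapunov (or Lindeberg--Feller) condition for the triangular array of standardized score contributions. This is where C3 is essential: the uniform bound on the covariates keeps the $\lambda_i$, and hence the moments of each score contribution, uniformly controlled, so that the finiteness of all moments of the EWPS distribution (established after equation (\ref{expansao_dens})) yields a uniform bound on the $(2+\varepsilon)$-th absolute moments of the summands. Combined with C4, which prevents the aggregate variance from degenerating, this delivers the Lyapunov condition and the stated normal limit.
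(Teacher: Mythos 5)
Your argument is correct and sits inside the same classical Cram\'er--Lehmann--Casella template the paper uses, but you and the paper dispatch the one genuine obstacle --- independence without identical distribution --- in different ways. The paper's entire proof is a reduction: it observes that under the true parameter vector the score and the observed information depend on the data only through $W_i=(Y_i/\lambda_i)^{\alpha}$, and these $W_i$ are iid, because the EWPS family is a scale family in $\lambda$, so the law of $W_i$ depends only on $(\alpha_0,\theta_0)$; with all randomness iid and the covariates entering only as bounded deterministic weights (C3), it then declares the proof to follow ``similarly'' to Theorem 5.1 of Lehmann and Casella (1998, p.~463), the same reduction having already powered Lemma~\ref{prop_cond}(iv) via the ordinary law of large numbers. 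You never exploit this scale-invariance; instead you keep the heterogeneous structure and verify a Lyapunov/Lindeberg--Feller condition for the triangular array of score contributions, using C3 for uniform $(2+\varepsilon)$-moment bounds and C4 to keep the aggregate variance nondegenerate. Both routes are valid, and yours is arguably more honest: even after the paper's iid reduction the score summands, e.g.\ $\alpha\,x_i\,(\lambda_i^{-1}d\lambda_i/d\eta_i)\,Y^*_i$, are weighted sums of iid variables with $i$-dependent weights, so the ``similarly'' silently hides exactly the weighted-sum CLT that you make explicit; conversely, the paper's reduction makes the Lindeberg verification nearly automatic and is the cleaner device to know about.

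Two small points to tighten. First, the display $0=U_n(\widehat\Theta_n)=U_n(\Theta^{(0)})-K_n(\bar\Theta_n)(\widehat\Theta_n-\Theta^{(0)})$ is not valid with a single intermediate point $\bar\Theta_n$, since the mean value theorem fails for vector-valued maps; apply it row by row (one intermediate point per component of the score, as Lehmann and Casella do) or use an integral form of the remainder --- your subsequent convergence argument survives unchanged. Second, you slightly misattribute the role of C1: the sphere argument for consistency is driven entirely by C4 (positive definiteness of $J$), not by identifiability; what C1 buys is that the true parameter is well defined at all --- if $C(\cdot)$ were odd, $\theta_0$ and $-\theta_0$ would index the same distribution, $\Theta^{(0)}$ would not be unique, and the statement ``$\widehat\Theta_n$ is consistent for $\Theta^{(0)}$'' would be vacuous before any expansion is attempted.
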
 
\begin{proof}
Under the true parameter vector, the score vector and the total observed information matrix depend on the response variable only through the iid random variables $W_1,\ldots,W_n$. From Lemma \ref{prop_cond}, the proof follows similarly to the proof of Theorem 5.1 described by Lehmann and Casella (1998, p. 463). 

\end{proof}

Let $(\Theta_1,\Theta_2)$ be a partition of the parameter vector and $d\leq k+2$ be the dimension of $\Theta_1$. Consider the null hypothesis $H_0:\Theta_1=\Theta^{(0)}$. From Lemma 3.1 and Proposition 3.2, we have that, under $H_0$, the asymptotic distribution of the score, likelihood ratio, and Wald statistics are $\chi_d^2$ (see Sen et al., 2011, p. 261). Therefore, these test statistics can be used to test the suitability of the EWPS regression against its main nested model, which is the Weibull regression model.

\section{Quantile estimation}

The quantile of order $\xi$ ($0<\xi<1$), $\xi$-quantile for short, of the distribution of a random variable $Y$ with cdf $F_{Y;\theta}(\cdot)$, 
which is denoted by $q_{_\xi}$, is the solution of
$
q_{_\xi}=\inf\{y:F_{Y;\theta}(y)\leq \xi\}.
$
If $Y$ has a continuous distribution, the $\xi$-quantile of $Y$ can be expressed as 
\begin{equation}\label{quantil}
q_{_\xi}=F_{Y;\theta}^{-1}(\xi).
\end{equation}

Let $q_{_\xi}$ be the $\xi$-quantile of a EWPS distribution for a fixed $\xi\in(0,1)$. It follows from (\ref{quantil}) and (\ref{cdf}) that the $\xi$-quantile can be written as
\begin{equation*}
q_{_\xi}=\lambda B_{_\xi}(\theta)^{1/\alpha},
\end{equation*}
where $B_{_\xi}(\theta)=-\log (C^{-1}((1-\xi)C(\theta))/\theta)$, for $\theta\neq 0$, and $C^{-1}(\cdot)$ is the inverse function of $C(\cdot)$. Note that $C(\cdot)$ is monotone for $\theta\in(s^*,s)$, which means that its inverse function is well defined.
For $\theta=0$, $B_{_\xi}(\theta)=\left(-\log(1-\xi)\right)^{1/\alpha}$.
Clearly, for any fixed $\xi$, $q_{_\xi}$ is a scale parameter. Quantiles of different orders obey the following proportionality 
relationship:
\begin{equation*}
\frac{q_{_\xi}}{q_{_{\xi'}}}=\left(\frac{B_\xi(\theta)}{B_{\xi'}(\theta)}\right)^{1/\alpha},\,\,\, \xi\neq\xi'.
\end{equation*}

Let $\widehat \beta$, $\widehat\theta$, and $\widehat\alpha$ be the MLEs of  $\beta$, $\theta$, and $\alpha$, respectively. From the invariance property of MLEs, we find that
$$\widehat q_{_{\xi,i}}=\widehat \lambda_i B_{_\xi}\big(\widehat\theta\big)^{1/\widehat\alpha}, $$
is the MLE of $q_{_{\xi,i}}$, where $\widehat \lambda_i=h^{-1}(\widehat\eta_i)$, and $q_{_{\xi,i}}$ is $\xi$-quantile of $Y_i$. Here, $\widehat \eta_i=x_i^\top \widehat\beta$. For the log-link function, i.e., $h(\lambda_i)=\log(\lambda_i)$, we can write $q_{_{\xi,i}}=\exp\{\beta^*_0+\beta_1 x_{1i}+\ldots+\beta_k x_{ki}\}$, where $\beta^*_0=\beta_0+\alpha^{-1}\log B_{_\xi}(\theta)$ is a modified intercept. 
In this case, the effect of the covariates on any quantile or scale parameter is the same.

The second-order Taylor series expansion of $\widehat q_{_{\xi,i}}$ around $(\widehat\beta^\top,\widehat\alpha,\widehat\theta)^\top=
(\beta^\top,\alpha,\theta)^\top$ is given by
\begin{eqnarray*}
\widehat q_{_{\xi,i}}&\approx& q_{_{\xi,i}} + \frac{\widehat\theta - \theta}{\alpha} \lambda_iB'_{_\xi}(\theta)B_{_\xi}(\theta)^{1/\alpha-1}-\frac{\widehat\alpha-\alpha}{\alpha^2}h^{-1}(\eta_i)B_{_\xi}(\theta)^{1/\alpha}\log\left(B_{_\xi}(\theta)\right)\\
&&+\frac{\partial \lambda_i}{\partial \eta_i}B_{_\xi}(\theta)^{1/\alpha}\sum_{j=1}^k (\widehat\beta_j-\beta_j)x_{ij}.\\
\end{eqnarray*}
An approximation for the variance of $\widehat q_{_{\xi,i}}$ is given by

\begin{equation}\label{var_q}
{\rm Var}(\widehat q_{_{\xi,i}})\approx E_i^\top \Sigma E_i ,
\end{equation}
where $\Sigma$ is the asymptotic covariance matrix of $\widehat\Theta=(\widehat\beta,\widehat\alpha,\widehat\theta)$, $E_i=(\varepsilon_{1i}^\top,\varepsilon_{2i},\varepsilon_{3i})^\top$ is a vector 
with elements given by $$\varepsilon_{1i}=\frac{d \lambda_i}{d \eta_i} B_{_\xi}(\theta)^{1/\alpha}x_i,\quad \varepsilon_{2i}=-\alpha^{-2}\lambda_i B_{_\xi}(\theta)^{1/\alpha}\log\left(B_{_\xi}(\theta)\right), \quad \varepsilon_{3i}=\lambda_i\alpha^{-1}B'_{_\xi}(\theta)B_{_\xi}(\theta)^{1/\alpha-1}.$$ Because $x_i=(x_{i1},\ldots,x_{ik})^\top$, $\varepsilon_{1i}$ is a vector of length $k$. In practice, $\Sigma$ can be estimated by $K_n^{-1}(\widehat\Theta)$ given in (\ref{hessian}).

\section{Application to real data}

In this section we illustrate an application of the EWPS regression model to a real data set from Tomczak (2010) on the tensile strengths of coconut fibers of different diameters and different lengths.
We fit the extended Weibull Poisson (EWP), Weibull logarithmic (EWL), and 
Weibull geometric (EWG) models, which are EWPS distributions with the function $C(\cdot)$ as defined in Table \ref{tab1}. The results were contrasted with the Weibull regression model. For this application, the corresponding regression models are specified using the log-link function, i.e., 
\begin{equation*}
\lambda_{_{i}}=\exp\{\beta_0+\beta_1x_{1i}+\beta_2 x_{2i}\}, \quad i=1,\ldots,225.
\end{equation*}
Accordingly, the regression model for the $\xi$-quantile is
\begin{equation*}
q_{_{\xi,i}}=\exp\{\beta_0^*+\beta_1x_{1i}+\beta_1 x_{2i}\},\quad i=1,\ldots,225,
\end{equation*}
where $\beta_0^*=\beta_0+\alpha^{-1}\log B_{_\xi}(\theta)$, $x_{1i}$ is the length in millimeters (mm) of the $i$th fiber, and $x_{2i}$ is the logarithm of the diameter (mm) of the $i$th fiber.
Because the logarithmic link function is employed, the effect of the covariate on the scale parameter and on all the quantiles is the same, as discussed in Section 4. 

To obtain the MLEs, we construct a profile log-likelihood function by fixing a grid of values of $\theta$. We start with $\theta=0$, i.e., the Weibull regression model. The estimated parameters for fixed $\theta=0$ are used as starting values to obtain the estimates for fixed $\theta=0.01$; the estimated parameters for fixed $\theta=0.01$ are used as starting values to obtain the estimates for fixed $\theta=0.02$, and so on, and similarly to negative values of $\theta$. Close to the global maximum, we evaluate the profile log-likelihood function at a finer grid of values of $\theta$. We use the same procedure for the negative values of $\theta$.
For each fixed value of $\theta$, we find the estimates of the parameters using the BFGS method implemented in the software R. Figure 3 presents plots of the profile log-likelihood for the EWL, EWG, and EWP regression models.

\begin{figure}[h]
	\centering
\includegraphics[width=.34\textwidth]{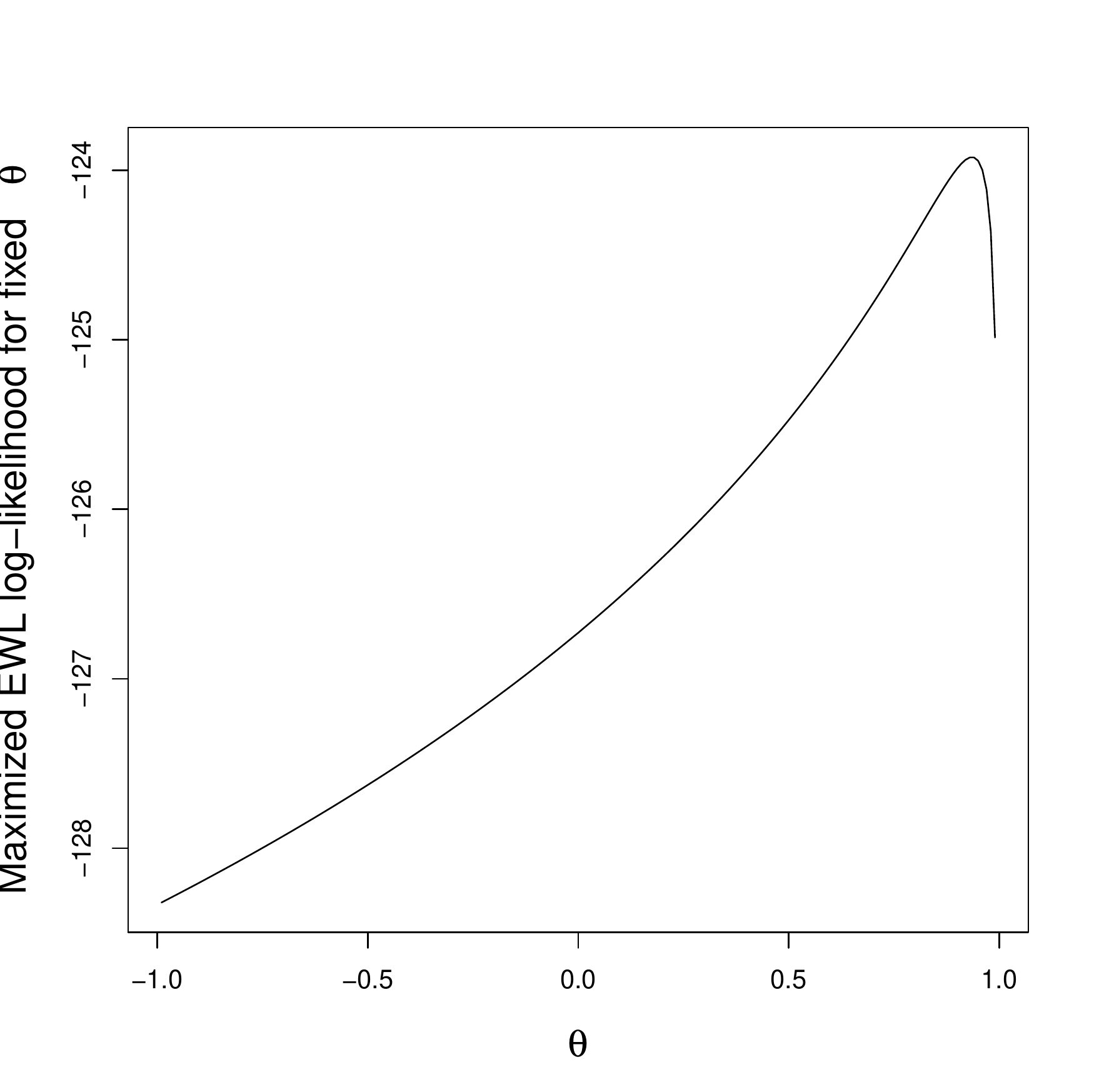}\includegraphics[width=.34\textwidth]{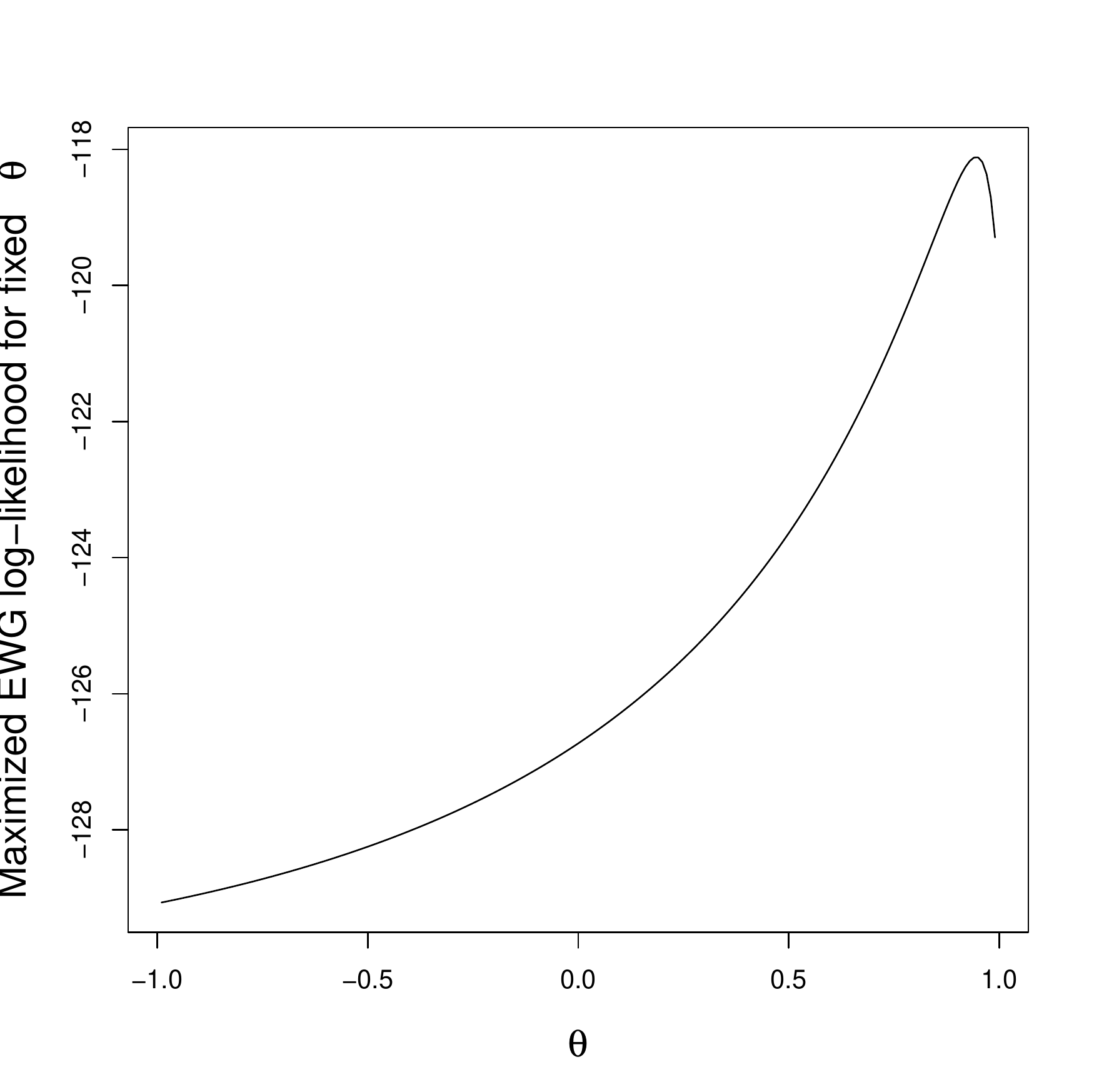}\includegraphics[width=.34\textwidth]{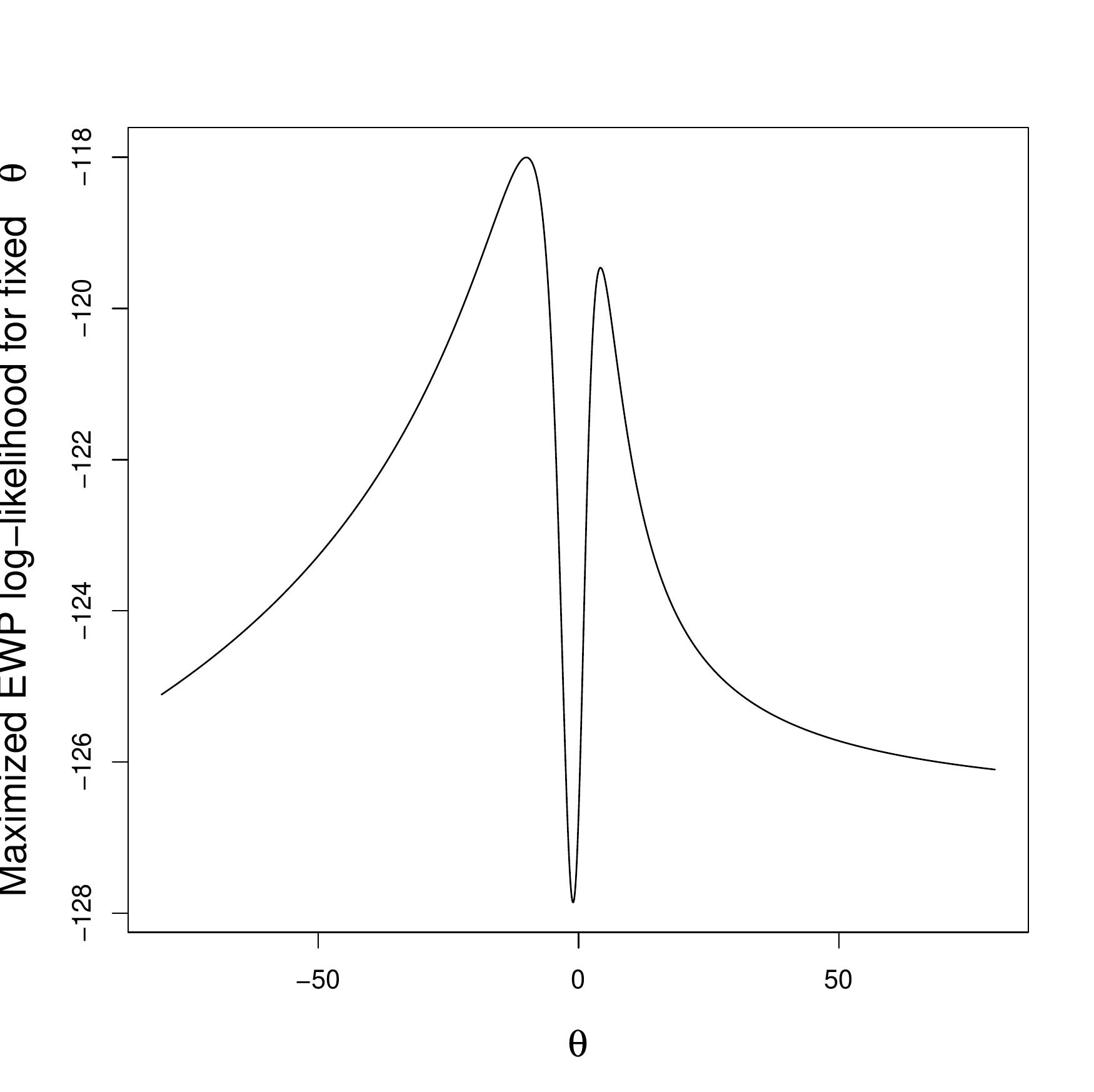}
\caption{Profile log-likelihood for the EWL, EWG, and EWP regression models.}
\label{qqplots}
\end{figure}

\begin{table}[h]
\caption{Parameter estimates (standard errors are shown in parentheses), 
maximized log-likelihood, log-likelihood ratio statistic (LR), and AIC for the Weibull, EWL, EWG, and EWP regression models.}
\begin{center}
{\small
\begin{tabular}{ccccccccc}
\hline
Model&$\widehat\beta_0$&$\widehat\beta_1$&$\widehat\beta_2$&$\widehat\alpha$&$\widehat\theta$&$\widehat\ell$&LR&AIC\\
\hline
Weibull&$-0.2750 $ &$-0.0132$&$-0.5729$&$3.2631$ &-&$-126.7286 $&-&$261.4$\\
&$(0.0989)$ & $(0.0038)$ &$(0.0555)$&$(0.1583)$  &     &&&\\
EWL&$-0.0613$ &$-0.0116$&$-0.5541$&$4.4998$&$0.9355$&$-123.9225$&$5.6122$&$257.8$\\
&$(0.1225)$ &$ (0.0038)$ &$(0.0568)$ &$(0.6588)$&$(0.0775)$&&$(0.0178)$&\\
EWG&$0.0982$ &$-0.0109 $&$-0.5876$ &$5.0520$ &$0.9455$&$-118.1125$&$17.2323$&$ 246.2$\\
&$(0.1491)$ &$ (0.0040)$ &$(0.0622)$ &$(0.3679)$&$(0.0405)$&&$(<0.001 )$&\\
EWP& $-1.1728$ &$-0.0119$&$-0.5813$  &$1.3694$&$-10.0190$ &$-118.0037$&$17.4498$&$246.0$\\
&$(0.2179)$ &$ (0.0040)$ &$(0.0615)$ &$(0.1859)$&$(3.2091)$&&$(<0.001 )$&\\
\hline
\end{tabular}
}
\end{center}
\end{table}

The parameter estimates, the value of the maximized log-likelihood $\widehat\ell$, and the Akaike Information Criterion (AIC) for the different regression models are given in Table 2. We also performed the likelihood ratio (LR) test of the null hypothesis ${\cal H}_0: \theta=0$ against ${\cal H}_1: \theta\neq 0$, i.e., the Weibull model was tested against a chosen EWPS model. Under the null hypothesis, the asymptotic distribution of the LR test statistic is $\chi_1^2$. The LR statistics used to test the Weibull model against the EWL, EWG, and EWP are given in Table 2 with their respective $p$-values in parentheses. For the usual significance levels, the LR tests rejected the Weibull distribution in favor of the EWG and EWP distributions. The AICs indicate that the EWG and the EWP models fit the data better than the EWL and Weibull models.

For diagnostic purposes, we used the quantile residual proposed by Dunn and Smyth (1996). Let $Y\sim \mbox{EWPS}(\lambda,\alpha,\theta;C)$ and let $Q_r=\Phi^{-1}(F(Y;\lambda,\alpha,\theta))$, where $\Phi(\cdot)$ is the cdf of the standard normal distribution, and $F(Y;\lambda,\alpha,\theta)$ is the cdf given in (\ref{cdf}). Then, $F(Y;\lambda,\alpha,\theta)$ is uniformly distributed in the unit interval, and $Q_r$ has a standard normal distribution. Hence, if the assumed EWPS regression model is suitable for the data, the quantile residuals defined as $Q_{r,i}=\Phi^{-1}(F(y_i;\widehat\lambda_i,\widehat\alpha,\widehat\theta))$, for $i=1,\ldots,n$, are expected to behave as iid $N(0,1)$ random variables. Here, $y_i$ denotes the $i$th observed response.

Figure 4 presents Q-Q plots of the quantile residuals for the four different fitted EWPS models. It is clear that the EWG and EWP models fit the data better than the other models and that the Weibull regression model is undoubtedly inappropriate. These findings are confirmed through the Shapiro-Wilk and Anderson-Darling normality tests, as shown in Table 3. None of the tests rejects the normality of the residuals for the EWG and EWP fitted models. The comparision of the EWP and EWG models revealed that the first appears to be the best choice: its quantile residuals agree almost perfectly with the normal quantiles, it presents the largest p-values for all of the normality tests, and it is the only model that captures the parallel system nature of the coconut fibers. 

\begin{table}[h]
\caption{Normality test statistics of quantile residuals for the fitted models (p-values are shown in parentheses).}
\begin{center}
\begin{tabular}{ccc}
\hline
Model& Shapiro-Wilk &Anderson-Darling\\
\hline
Weibull&$0.9719$&$1.1515$\\
&$(0.0002)$&$(0.0051)$\\
EWL&$0.9810$&$0.6785$\\
&$( 0.0039)$&$(0.0754)$\\
EWG&$0.9970$&$0.2565$\\
&$(0.9456)$&$(0.7201)$\\
EWP&$0.9975$&$0.2039$\\
&$(0.9778)$&$(0.8738)$\\
\hline
\end{tabular}
\end{center}
\end{table}

\begin{figure}[h]
	\centering
\includegraphics[width=.4\textwidth]{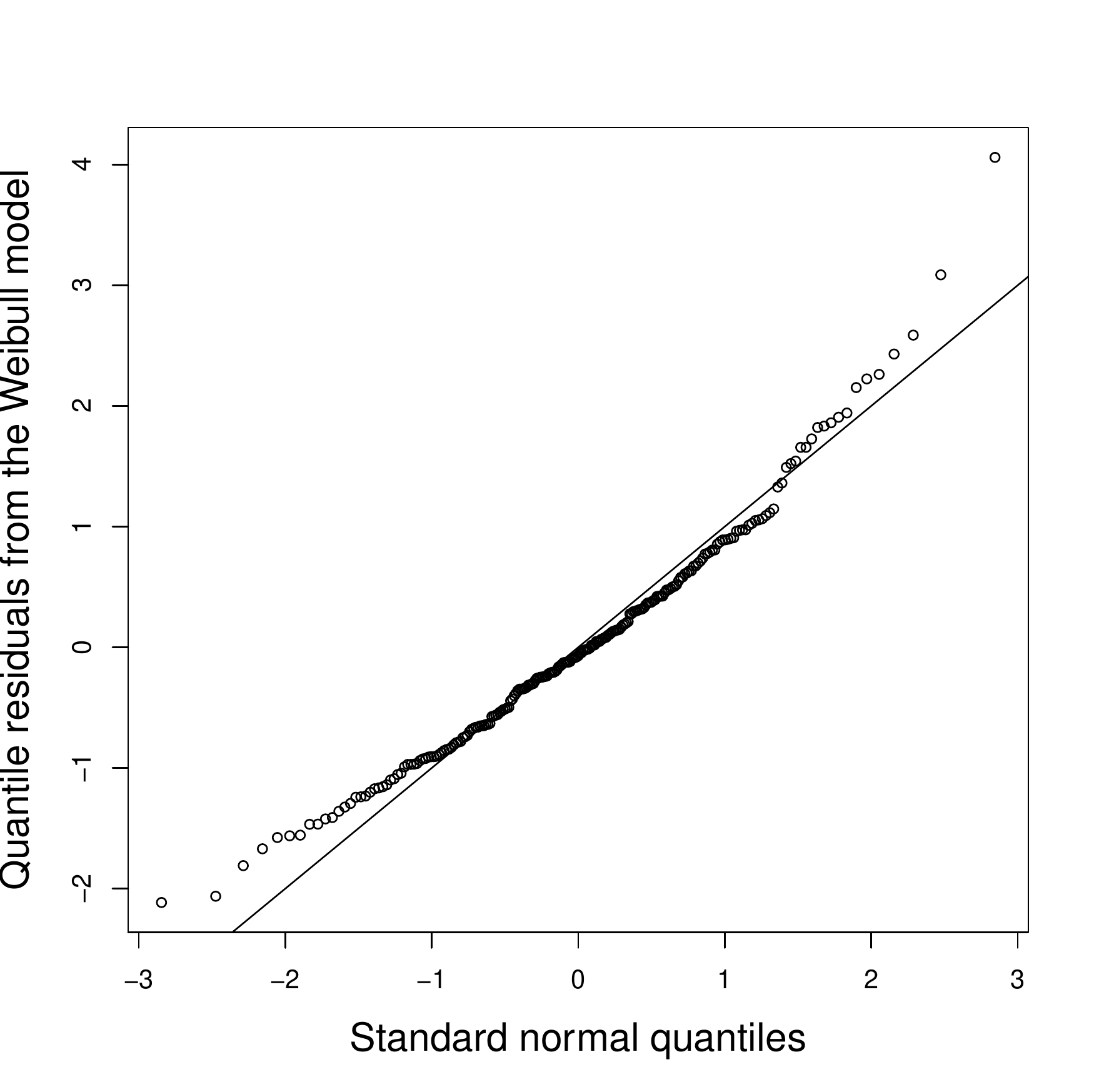}\includegraphics[width=.4\textwidth]{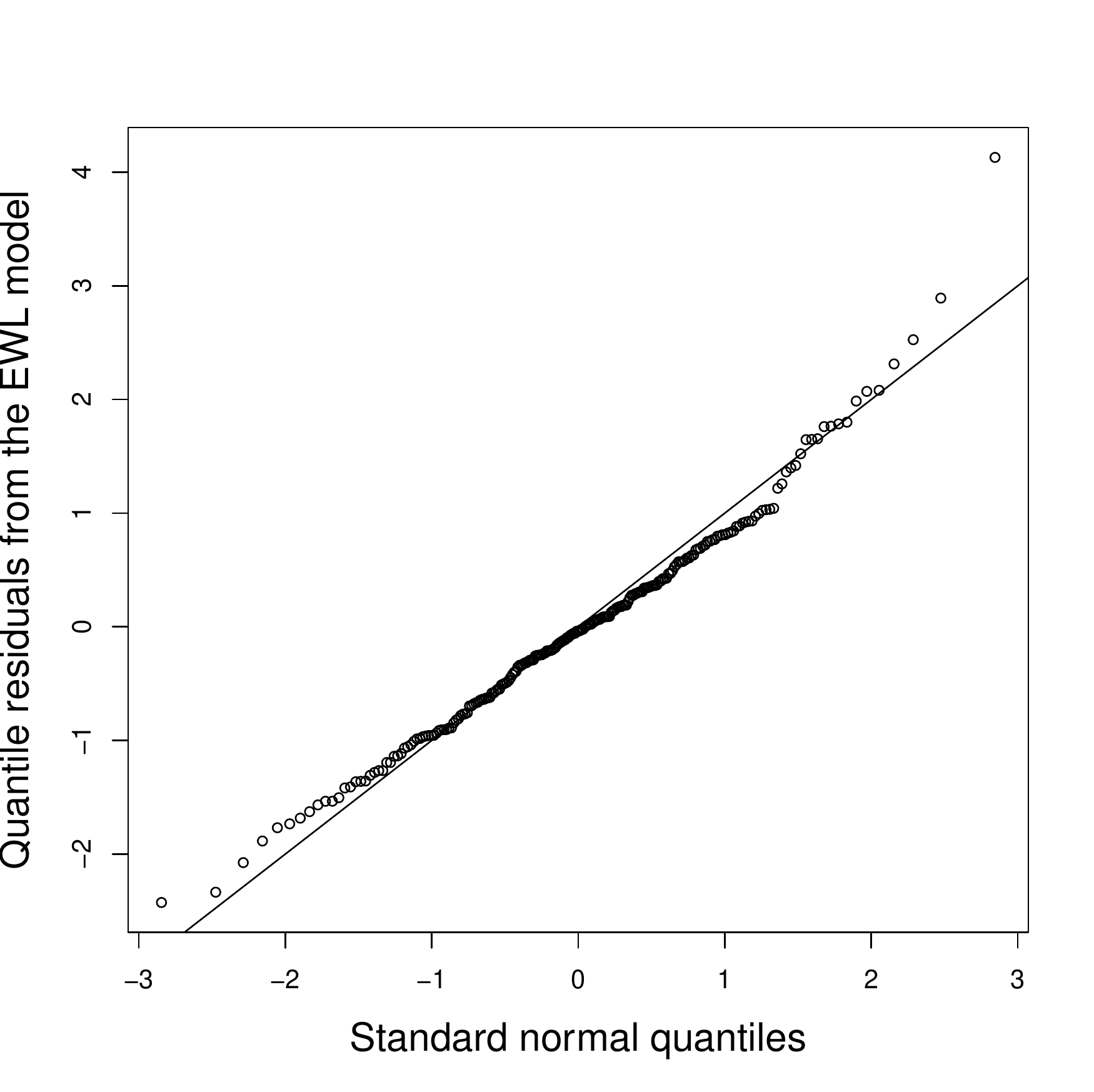}
\includegraphics[width=.4\textwidth]{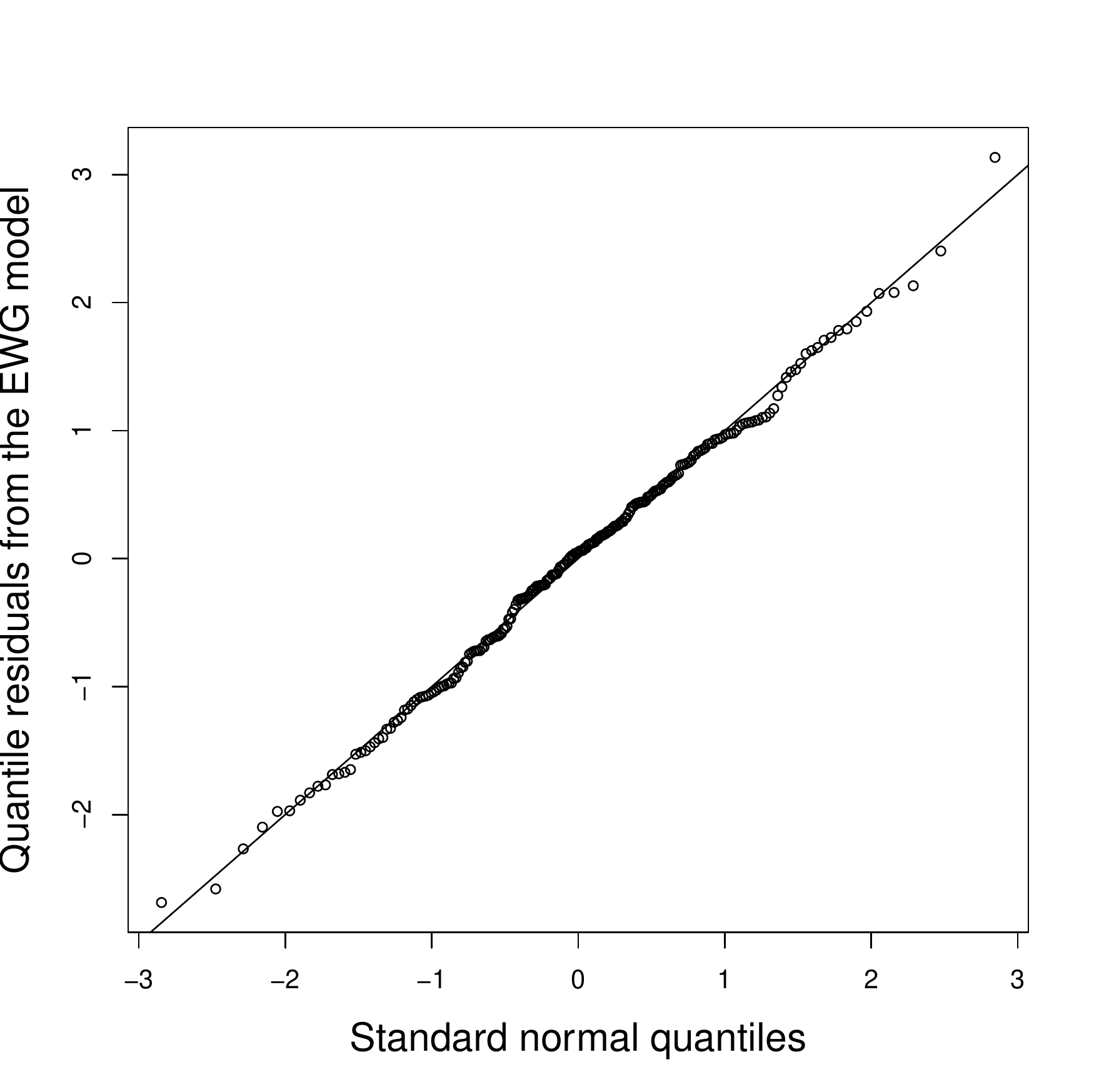}\includegraphics[width=.4\textwidth]{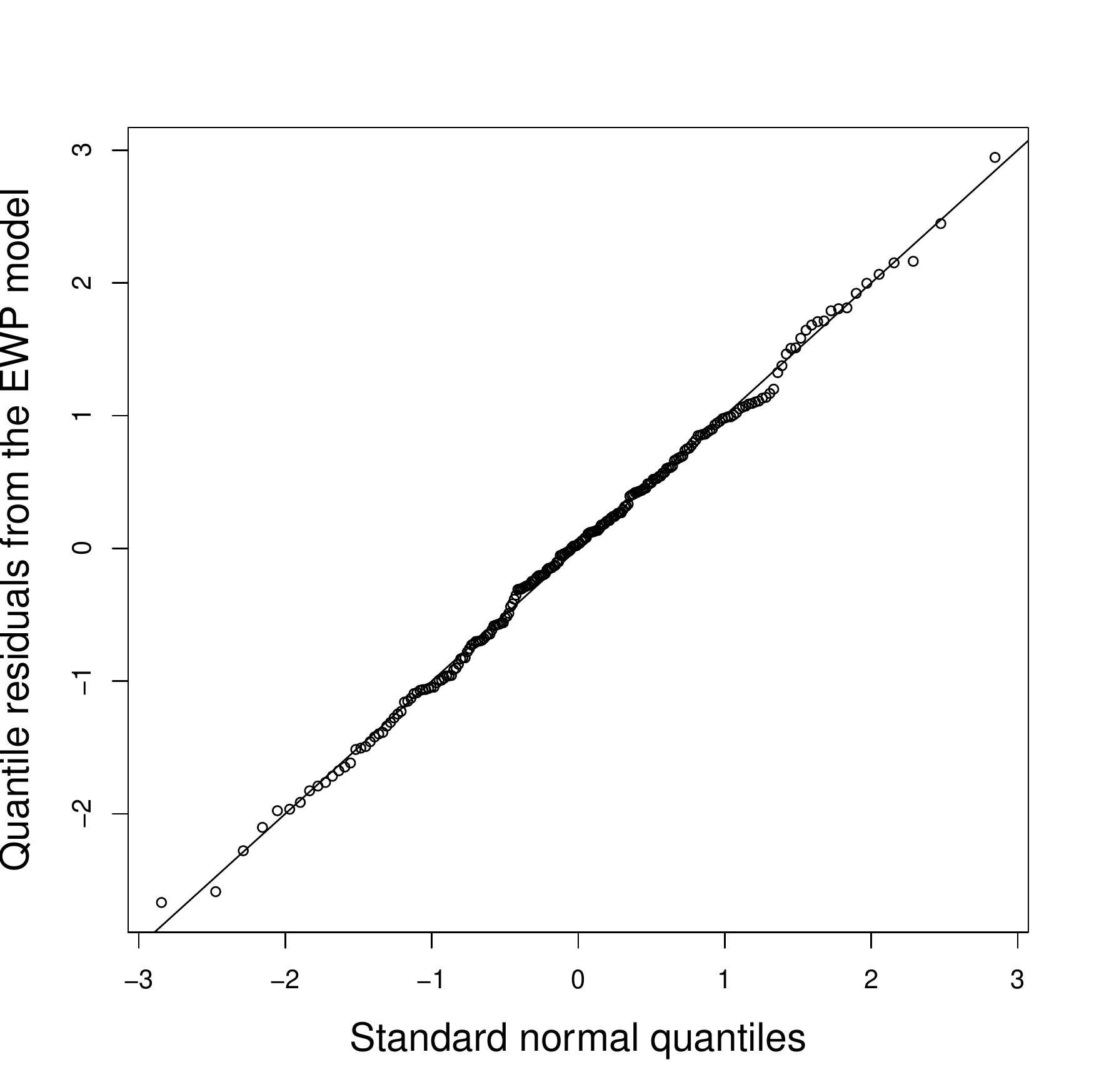}
\caption{Q-Q plot of the quantile residuals.}
\label{qqplots}
\end{figure}

Figure \ref{fit.quantis} shows plots of the fitted $0.1$, $0.5$, and $0.9$ quantiles estimated from the EWP model against the fiber diameter for fibers with a length equal to 20 mm. The solid lines are the quantiles curves, and the dashed lines bind the $95\%$ approximated confidence bands. The confidence intervals are obtained according to the asymptotic normal distribution of the estimated quantiles and the approximate variance given in (\ref{var_q}).

\begin{figure}[h]
	\centering
		\includegraphics[width=.33\textwidth]{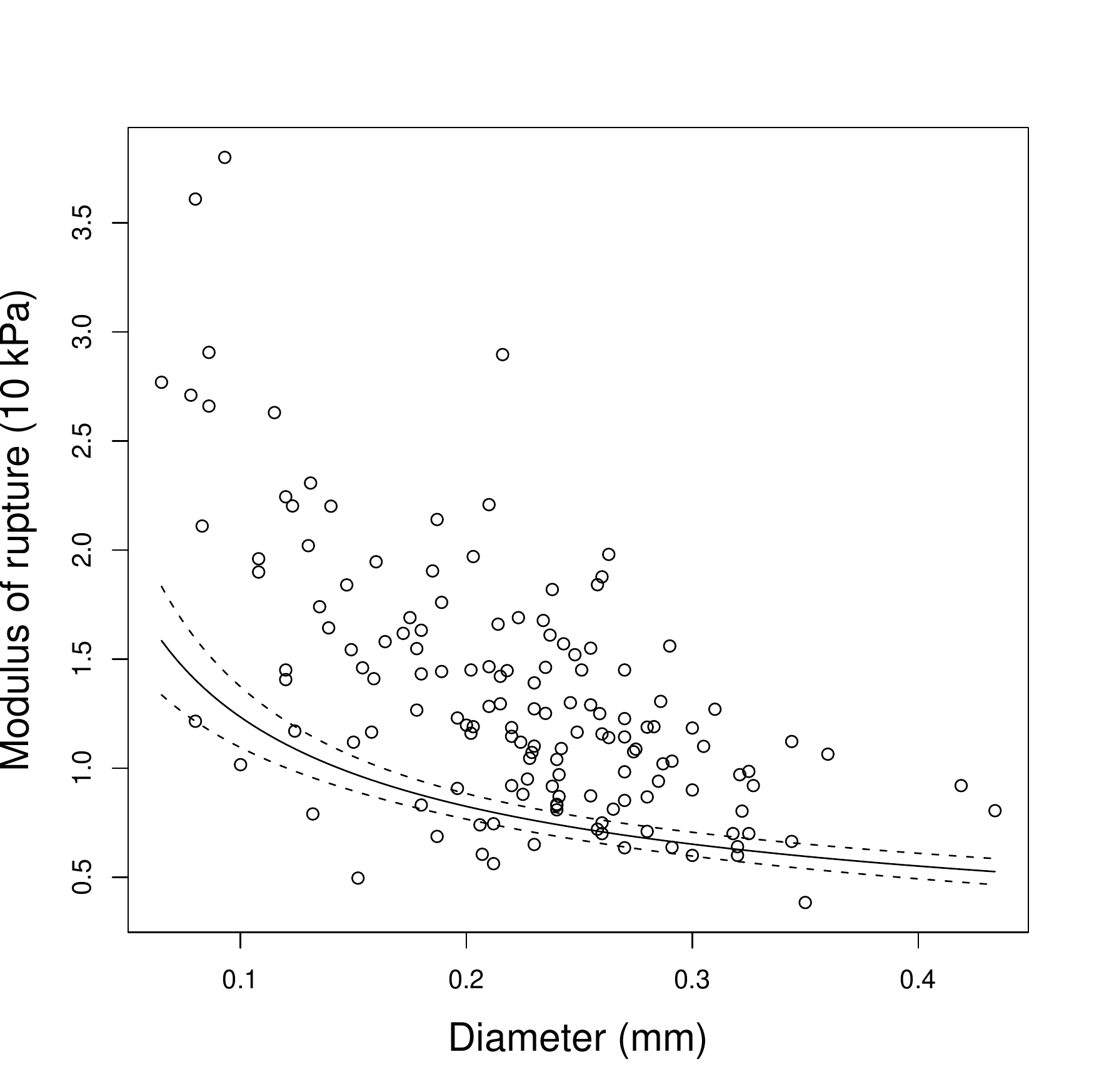}\includegraphics[width=.33\textwidth]{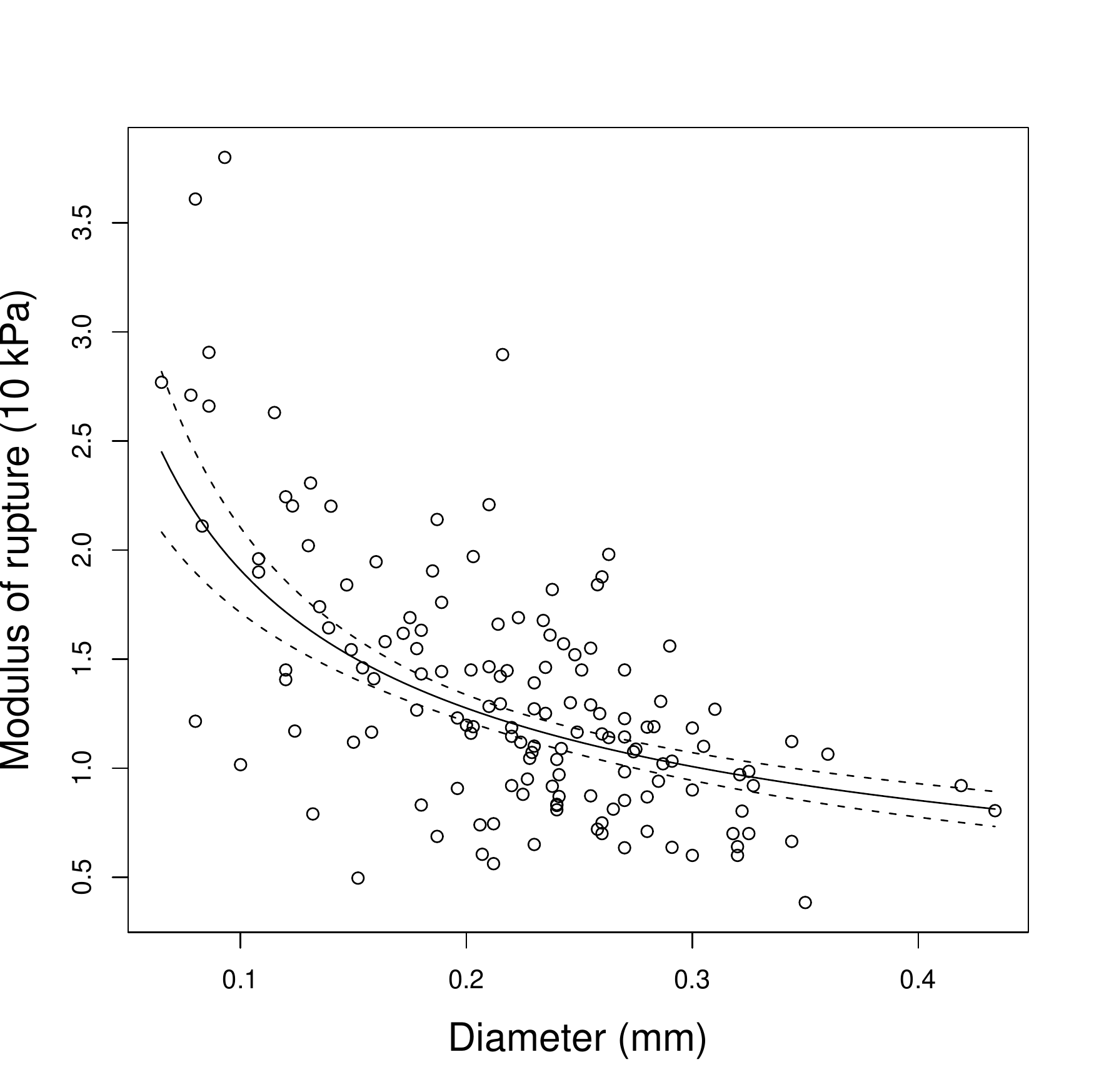}\includegraphics[width=.33\textwidth]{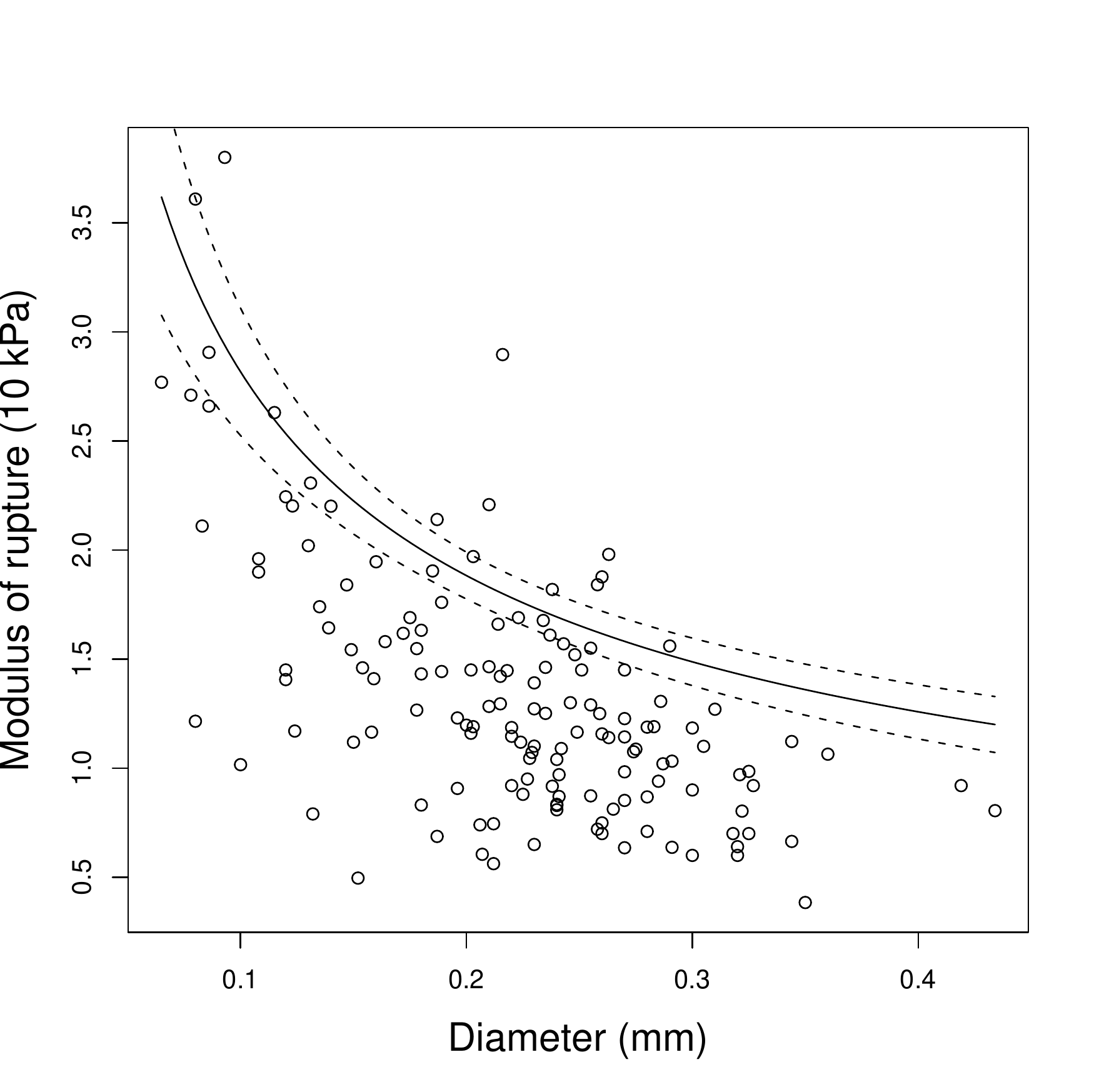}
	\label{aplicacao_graf1}\caption{Plots of the fitted EWP quantiles for $\xi=0.1$, $0.5$ and $0.9$. The solid lines are the point quantiles estimates, and the dashed lines bound the 95\% confidence region.}
\label{fit.quantis}
\end{figure}

\section{Discussion of a larger extension of the EWPS class of distributions}

The WPS distributions introduced by Morais and Barreto-Souza (2011) are based on a composition between the Weibull law with scale and shape parameters $\lambda$ and $\alpha$, respectively, and the discrete power series class of distributions with parameter $\theta$. It is well known that the parameter $\theta$ of the power series distributions is positive. In this paper, we present the extended Weibull power series (EWPS) distributions, which are an extension of the WPS distributions that accepts negative values for $\theta$. The construction of the EWPS distributions was based on the power series function $C(\cdot)$ given in (\ref{fpseries}) with radius of convergence $s$. 
For some EWPS distributions, however, the parameter space for $\theta$ can be even further extended to the interval $(s^\dag,s)$, where $s^\dag\leq -s$. For instance,  the parameter space for $\theta$ in the EWG and EWL distributions can be extended to $(-\infty,1)$. The respective EWPS distributions for the other power series reported in Table 1 do not allow a larger extension, i.e., $\theta$ cannot be smaller than $s^*$. 
It should be noted that the results presented in Section 2 may not be valid for $\theta\leq -s$.

To formalize this concept, we define $D:(s^\dag,s)\rightarrow \mathbb{R}$ as a continuous function that admits all derivatives and such that 
$D(\theta)=C(\theta)$, $\forall \theta\in (-s,s)$.
Note that $D(\theta)$ is not written as a power series when $\theta\in (s^\dag,-s)$. Taking $\mathcal{D}=\{\theta\in (s^\dag,0): D'(\theta)=0\}$, we define
\begin{equation*}\label{def_s2}
d^*=\left\{
\begin{array}[c]{cc}
	\max \mathcal{D},& {\rm if}\,\,\, \mathcal{D}\neq\emptyset\\
	s^\dag,& {\rm otherwise.}\\
\end{array}\right.
\end{equation*}
It is possible to check that the function in (\ref{dens_WPS}) is a density function for $\theta\in (d^*,0)$. After this larger extension of the parameter space, the identifiability still holds if and only if $C(\cdot)$ is not odd. The parallel system characterization discussed in Proposition 2.5 may not be valid for $\theta\leq -s$ even if it is valid for $\theta\in (-s,0)$. For the EWG and EWL distributions, the parallel system characterization for $\theta\leq -1$ is valid by taking $t(\theta)=\theta/(\theta-1)$.

Other extended classes of distributions can be constructed from the ideas presented in this paper. In fact, the Weibull distribution used for the construction of the EWPS class of distributions can be replaced by any other survival distribution, for instance, the exponential, gamma, and lognormal distributions. The results of all the propositions shown in Section 2, except for Proposition 2.4, depend on the Weibull distribution only through its survival function, which may be replaced by any other survival function. Proposition 2.4 refers to the identifiability, which will hold by replacing the Weibull survival function for any other survival function only if its respective distribution is identifiable. It is clear that the inferential methods presented in Sections 3 and 4 should be tailored for the chosen survival distribution. In this paper, the choice of the Weibull distribution was motivated by its popularity for the modeling of survival and reliability data.

\section{Concluding remarks}

We extended the Weibull power series (WPS) class of distributions proposed by Morais and Barreto-Souza (2011) such that the Weibull distribution is a special case of this new class. This extended class, which was named the extended Weibull power series (EWPS) class of distributions, is based on a composition between the Weibull and the power series distributions. The WPS and EWPS distributions are indexed by a scale parameter $\lambda$, a shape parameter $\alpha$, and a parameter $\theta$, which was inherited from the power series distributions. For the WPS distributions, $\theta$ is a positive parameter. For the EWPS distributions, the parameter space of $\theta$ was extended to include negative values.

The WPS distributions are related to series systems in which the number of components follows a power series distribution and the time to failure of each component follows a Weibull distribution. The WPS distributions exhibit a stochastic and hazard order according to the parameter inherited from the power series distribution in the construction of the WPS distributions. The hazard functions of the WPS distributions are always above the hazard function of the Weibull distribution. This is a limitation of these distributions that is eliminated when the extension of the class is considered. The hazard functions of the EWPS distributions may be below, above, or cross the hazard function of the Weibull distribution. Although the construction of this class was motivated by series and parallel systems, this model is suitable for a wide range of data with positive support. 

We proposed a regression model for the EWPS class of distributions, so-called EWPS regression models. A linear regression structure was defined for the scale parameter. We discussed estimation by the maximum likelihood approach and derived the total observed 
information matrix, which is useful for making inferences for the parameters of the regression model. We verified the asymptotic distribution for the ML estimator. In survival studies, there is an interest in the quantiles of the response variable. To meet this need, we discussed the ML estimation of quantiles. In addition, we fitted the EWP, EWG, and EWL regression models for a real data set on the tensile strength of coconut fibers of different lengths and diameters to illustrate the applicability of the EWPS regression models and presented a simple device for diagnostic purposes.

\section*{Acknowledgments}
We gratefully acknowledge the financial support from CNPq, CAPES, and FAPESP (Brazil).

%\newpage

\section*{Appendix}

 \renewcommand{\theequation}{A.\arabic{equation}}
 \setcounter{equation}{0}

\noindent {\bf Proof of Proposition 3.4}

\vspace{.4cm}

%\begin{proof}
From the cdf $F(y;\lambda,\alpha,\theta)$ in (\ref{cdf}), it is easy to see that the EWPS distribution is not identifiable if the function $C(\cdot)$ is odd. If identifiability does not hold, there are two different parameter vectors $\Theta_1=(\alpha_1,\lambda_1,\theta_1)$ and $\Theta_2=(\alpha_2,\lambda_2,\theta_2)$ for which $f(y;\lambda_1,\alpha_1,\theta_1)=f(y;\lambda_2,\alpha_2,\theta_2)$ $\forall y>0$. Assume $\theta_2=m\theta_1$. From (\ref{expansao_dens}), we obtain
$$\sum_{i=1}^\infty\sum_{n=1}^\infty a_na_i\theta_1^{i+n}m^i g(y;\lambda_1 n^{-1/\alpha_1},\alpha_1)=\sum_{i=1}^\infty\sum_{n=1}^\infty a_na_i\theta_1^{i+n}m^n g(y;\lambda_1 n^{-1/\alpha_2},\alpha_2), \quad \forall y>0.$$

\noindent Taking the first term of the polynomial in $\theta_1$ on both the right-hand and the left-hand sides of the above equation, we have that $g(y;\lambda_1,\alpha_1)=g(y;\lambda_2,\alpha_2)$, $\forall y>0$. In addition, because the Weibull distribution is identifiable, $\lambda_1=\lambda_2$ and $\alpha_1=\alpha_2$. Then, from now on, let $\lambda=\lambda_1$ and $\alpha=\alpha_1$. From the cdf $F(y;\lambda,\alpha,\theta)$ in (\ref{cdf}) we have 
$C(\theta_2)\sum_{n}  a_n \theta_1^{n} \left(e^{-(y/\lambda)^\alpha}\right)^n=C(\theta_1)\sum_{n}  a_n \theta_2^{n} \left(e^{-(y/\lambda)^\alpha}\right)^n, \quad \forall y>0.$
From the polynomial in $e^{-(y/\lambda)^\alpha}$ we obtain 

\begin{equation}\label{div}
\left (\frac{\theta_1}{\theta_2}\right)^n=\frac{C(\theta_1)}{C(\theta_2)},\quad \forall n\in K,
\end{equation}
\noindent where $K=\{k\in\mathbb{N}: a_k>0\}$. If $K$ contains both even and odd values, then (\ref{div}) implies that $\theta_1=\theta_2$, which is an absurd due to the non-identifiability assumption. If $K$ contains only odd values, then (\ref{div}) is satisfied for $\theta_1=-\theta_2$ and the function $C(\cdot)$ is odd.

%\end{proof}

\vspace{.4cm}

\noindent {\bf Proof of Proposition 3.5}

\vspace{.4cm}

%\begin{proof}

Without loss of generality, we consider $\alpha=\lambda=1$. The marginal cdf of $Z$ is given by
\begin{eqnarray}
\nonumber P(Z\leq y)&=&\frac{C(t(\theta)(1-e^{-y}))}{C(t(\theta))}\\
\nonumber &=&\frac{1}{C(t(\theta))}\sum_{n=1}^\infty \sum_{i=0}^n a_n t(\theta)^n\binom{n}{i}(-1)^i e^{-iy}\\
\nonumber &=&1-\sum_{i=1}^\infty \frac{(-1)^{i-1}}{i!C(t(\theta))}e^{-iy}\sum_{n=1}^\infty \frac{a_n n!}{(n-i)!} t(\theta)^n \\
%&=& 1 - \sum_{i=1}^\infty b_i e^{-iy},
&=&1-\sum_{i=1}^\infty \frac{(-1)^{i-1}t(\theta)^i e^{-iy}}{i! C(t(\theta))}C^{(i)}(t(\theta))\label{exp1}\\
&=& 1-\sum_{i=1}^\infty \frac{a_i (\theta e^{-y})^i}{C(\theta)}=F(y;\lambda,\alpha,\theta),\label{exp2}
\end{eqnarray}
\noindent where $F$ is the cdf of the EWPS distribution given in (\ref{cdf}). 

\vspace{.7cm}

\noindent {\bf Proof of Proposition 3.6}\\

(i) By assumption, the cdf in (\ref{exp1}) is equal to the cdf in (\ref{exp2}). Because these are both polynomials in $e^{-y}$, if $a_n=0$, then $C^{(n)}(t(\theta)=0$ for all $\theta\in (s^*,s)$, which implies that $a_m=0$ $\forall m>n$.

(ii) Equating the first coefficients of the polynomials in $e^{-y}$ in (\ref{exp1}) and (\ref{exp2}), we obtain $E(N)=a_1\theta/C(\theta)$. Then, $t(\theta)$ is the solution of $E(N)=a_1\theta/C(\theta)$. Because the expected value of a power series random variable is monotone on its parameter, this solution is unique.

(iii) We have that $\theta/C(\theta)$ is decreasing in $\theta$. Then, from the equation $E(N)=a_1\theta/C(\theta)$, $E(N)$ is increasing in $t(\theta)$ and decreasing in $\theta$. Therefore, $t(\theta)$ is decreasing in $\theta$.
%\end{proof}

\vspace{.7cm}
\newpage
\noindent {\bf Proof of Proposition 3.8}\\

%\begin{proof}

Let $\theta\neq 0$. Expanding $C'(\theta e^{-(y/\lambda)^\alpha})$ and rearranging the terms of the sum, we obtain

\begin{eqnarray}\label{p_hf1}
\frac{r(y;\lambda,\alpha,\theta)}{r_0(y;\lambda,\alpha)}=\frac{\theta e^{-(y/\lambda)^\alpha} C'(\theta e^{-(y/\lambda)^\alpha})}{C(\theta e^{-(y/\lambda)^\alpha})}&=&1+\frac{\sum_{j=1}^\infty\sum_{i=j}^\infty \theta^i e^{-i(y/\lambda)^\alpha}ia_i}{C(\theta e^{-(y/\lambda)^\alpha})}.
\end{eqnarray}

The limit of (\ref{p_hf1}) when $y\rightarrow \infty$ is equivalent to the limit when $b\equiv e^{-(y/\lambda)^\alpha}$ goes to zero. Applying  L'Hôpital's rule for the second term of the right side of (\ref{p_hf1}) once, we find that (\ref{p_hf1}) goes to $1$ as $y\rightarrow\infty$ or $b\rightarrow 0$.
%\end{proof}

\vspace{.7cm}

\noindent {\bf Proof of Proposition 3.9}\\

%\begin{proof}
Let $N_{\theta}\sim{\rm PS}(\theta;C)$ for $\theta>0$, and let $M_{\theta}(t)$ for $t>0$ be the moment generating function of $N$. If $0<\theta_1<\theta_2$, $M_{\theta_1}(t)\leq M_{\theta_2}(t)$. Then, from Theorem 5.1 from Shaked and Wong (1997), $Y_{\theta_2}\leq_{st} Y_{\theta_1}$ and $Y_{\theta_2}\leq_{hr} Y_{\theta_1}$. Since we have a weak convergence of $Y_\theta\rightarrow Y_0$ when $\theta\rightarrow 0^+$, where $Y_0\sim\mbox{Weibull}(\lambda,\alpha)$, the stochastic and hazard rate orders still apply to $\theta_1=0$.
For $\theta_1<0$ and under the existence of a parallel system characterization, the proof follows similarly. 

\vspace{.7cm}

%\newpage 

\noindent {\bf Proof of Lemma 4.1}\\

\noindent (i) For $\theta\neq 0$, it is trivial to prove that all third derivatives exist. To prove that the third derivatives exist for $\theta=0$, it is sufficient to show that their limits as $\theta\rightarrow 0$ exist. We have
\begin{eqnarray*}
\frac{\partial^3\ell}{\partial \theta^3}&=&-nA(\theta)+\sum_{i=1}^n B_i(\theta) e^{-3W_i}\\
&&\rightarrow-n\frac{2a_2^3-6a_1a_2a_3}{a_1^3}+
\sum_{i=1}^n\frac{16 a_2^3-48a_1a_2a_3}{a_1^3} e^{-3W_i}, \quad \mbox{as} \quad \theta \rightarrow 0,
\end{eqnarray*}
\noindent where $$A(\theta)=\frac{1}{C(\theta)^3}\left (2C'(\theta)^3+C'''(\theta)C(\theta)^2-2C(\theta)^3\theta^{-3}-3C(\theta)C'(\theta)C''(\theta)\right )$$
\noindent and $$B_i(\theta)=\frac{1}{C'(\theta e^{-w_i})}\left (C''''(\theta e^{-w_i})C'(\theta e^{-w_i})^2+2C''(\theta e^{-w_i})^3-3C'''(\theta e^{-w_i})C''(\theta e^{-w_i})C'(\theta e^{-w_i})\right ).$$

\noindent Similarly, it can be shown that all other third derivatives exist when $\theta=0$.

Because $A(\theta)$ and $B_i(\theta)$ are continuous functions of $\theta$, there are $s^*<{s^*}'<0<s'<s$ and constants $c_1,c_2<\infty$ such that $|A(\theta)|<c_1$ and $|B_i(\theta)|<c_2$, $\forall i$, $\forall \theta\in [{s^*}',s']$. By the triangle inequality, 

$$\left |\frac{\partial^3\ell}{\partial \theta^3}\right | < n c_1+c_2\sum_{i=1}^n e^{-3W_i}<n(c_1+c_2).$$

\noindent Then, $\partial^3\ell/\partial\theta^3$ is dominated by the integrable function $n(c_1+c_2)$, which does not depend on the parameters. The proof for the other third derivatives is similar.

\vspace{.5cm}

\noindent (ii) The proof follows using the expansions $C'(\theta)=\sum_{n=1}^\infty n a_n \theta^{n-1}$ and $C''(\theta)=\sum_{n=1}^\infty n(n-1) a_n \theta^{n-2}$ and permuting the signs of integral and sum.

\vspace{.5cm}

\noindent (iii) The proof follows from (i) and (ii) and from the Dominated Convergence Theorem.

\vspace{.5cm}

\noindent (iv) The components of the score vector and the components of the total observed information matrix depend on the response variable only through $W_i$ for $i=1,\ldots,n$. Under $\Theta=\Theta^{(0)}$, the random variables $W_1,\ldots,W_n$ are iid. Then, the results follow from (ii), (iii), and the law of large numbers.

\end{document}